\newtheorem{theorem}{Theorem}
\newtheorem{definition}{Definition}
\newtheorem{lemma}{Lemma}
\newtheorem{remark}{Remark}
\newtheorem{proposition}{Proposition}
\DeclareMathOperator{\polylog}{polylog}
\DeclareMathOperator{\diag}{diag} 
\DeclareMathOperator{\vspan}{span}
\DeclareMathOperator{\lcm}{lcm}
\DeclarePairedDelimiter\ceil{\lceil}{\rceil}
\title{$X$-Secure $T$-Private Federated Submodel Learning with Elastic Dropout Resilience}
\author{Zhuqing Jia and Syed A. Jafar}
\affil{Center for Pervasive Communications and Computing (CPCC), UC Irvine\\
Email: \{zhuqingj, syed\}@uci.edu}
\date{}
\begin{document}

\maketitle
\begin{abstract}
Motivated by recent interest in   federated submodel learning, this work explores the fundamental problem of privately reading from and writing to a database comprised of $K$ files (submodels) that are stored across $N$ distributed servers according to an $X$-secure threshold secret sharing scheme. One after another, various users wish to retrieve their desired file, locally process the information and then update the file in the distributed database while keeping the identity of their desired file private from any set of up to $T$ colluding servers. The availability of servers changes over time, so  elastic dropout resilience is required. The main contribution of this work is an adaptive scheme, called ACSA-RW, that takes advantage of all currently available servers to reduce its communication costs, fully updates the database after each write operation even though the database is only partially accessible due to server dropouts, and ensures a memoryless operation of the network in the sense that the storage structure is preserved and future users may remain oblivious of the past history of  server dropouts. The ACSA-RW construction builds upon cross-subspace alignment (CSA) codes that were originally introduced for $X$-secure $T$-private information retrieval and have been shown to be natural solutions for secure distributed matrix multiplication problems.  ACSA-RW achieves the desired private read and write functionality with elastic dropout resilience, matches the best results for private-read from PIR literature, improves significantly upon available baselines for private-write, reveals a striking symmetry between upload and download costs, and exploits redundant storage dimensions to accommodate arbitrary read and write dropout servers up to certain threshold values. It also answers in the affirmative an open question by Kairouz et al. by exploiting synergistic gains from the joint design of private read and write operations.
\end{abstract}

\section{Introduction}
The rise of machine learning is marked by fundamental tradeoffs between  competing concerns. Central to this work are  1) the need for abundant training data, 2) the need for privacy, and 3) the need for low communication cost.  Federated learning \cite{McMahan_FL, FLsurvey1, FLsurvey2, FLsurvey3} is a distributed machine learning paradigm that addresses the first two concerns by allowing  distributed users/clients (e.g., mobile phones) to collaboratively train a shared model that is stored in a cluster of databases/servers (cloud) while keeping their training data private. The users retrieve the current model, train the model locally with their own training data, and then aggregate the modifications as  focused updates.  Thus, federated learning  allows utilization of abundant training data while preserving its privacy. However, this incurs higher communication cost as each update involves communication of all model parameters.

Communicating the full model may be unnecessary for large scale machine learning tasks where each user's local data is primarily relevant to a small part of the overall model. Federated \emph{Submodel} Learning (FSL) \cite{SFSM}  builds on this observation by partitioning the model into multiple submodels and allowing users to selectively train and update the submodels that are most relevant to their local view. This is the case, for example, in  the binary relevance method for multi-label classification\cite{Classifier_Read,Binary_Luaces}, which {\it independently} trains a series of binary classifiers (viewed as submodels), one for each label. Given a sample to be predicted, the compound model predicts all labels for which the respective classifiers yield a positive result. 

What makes federated submodel learning challenging is the  privacy constraint. The identity of the submodel that is being retrieved and updated by a user must remain private. Prior works \cite{SFSM, FLDP1, FLDP2, FLDP3, FLDP4}  that assume centralized storage of all submodels are generally able to provide relatively weaker privacy guarantees such as plausible deniability through  differential privacy mechanisms that perturb the data and rely on secure inter-user peer-to-peer communication for secure aggregation. On the other hand, it is noted recently by Kim and Lee in \cite{Kim_Lee_FSL} that if the servers that store the submodels are  \emph{distributed}, then stronger information theoretic guarantees such\footnote{By \emph{perfect} privacy we mean that absolutely no information is leaked about the identity of a user's desired submodel to any set of colluding servers up to a target threshold.} as ``perfect privacy'' may be attainable, without the need for user-to-user communication.  Indeed, in this work we focus on this setting of  distributed servers and perfect privacy. The challenge of federated submodel learning in this setting centers around three key questions.
\begin{enumerate}
    \item[{\bf Q1}] {\bf Private Read:} How can a user efficiently retrieve the desired submodel from the distributed servers without revealing which submodel is being retrieved?
    \item[{\bf Q2}] {\bf Private Write:} How can a user efficiently update the desired submodel to the distributed servers without revealing which submodel is being updated?
    \item[{\bf Q3}] {\bf Synergy of Private Read-Write:} Are there  synergistic gains in the \emph{joint} design of  retrieval and  update operations, and if so, then how to exploit these synergies?
\end{enumerate}
The significance of these fundamental questions goes well beyond federated submodel learning. As recognized by \cite{SFSM} the private read question (Q1) by itself is equivalent to the problem of Private Information Retrieval (PIR) \cite{PIRfirst, PIRfirstjournal}, which has recently been studied extensively from an information theoretic perspective \cite{Sun_Jafar_PIR,Sun_Jafar_TPIR,Sun_Jafar_SPIR,Wang_Sun_Skoglund,Wang_Skoglund_SPIREve,Banawan_Ulukus_BPIR,Tajeddine_Gnilke_Karpuk_Hollanti,Wang_Sun_Skoglund_BSPIR,Tajeddine_Gnilke_Karpuk, Wang_Skoglund_TSPIR, FREIJ_HOLLANTI, Sun_Jafar_MDSTPIR, Jia_Jafar_MDSXSTPIR,Zhou_Tian_Sun_Liu_Min_Size,Yang_Shin_Lee,Jia_Sun_Jafar_XSTPIR,Banawab_Arasli_Wei_Ulukus_Heterogeneous,Wei_Arasli_Banawan_Ulukus_Decentralized,Attia_Kumar_Tandon,Wei_Banawan_Ulukus, Tandon_CachePIR, Li_Gastpar_SSMUPIR,Wei_Banawan_Ulukus_Side,Chen_Wang_Jafar_Side,Sun_Jafar_MPIR,Yao_Liu_Kang_Multiround,Banawan_Ulukus_MPIR, Shariatpanahi_Siavoshani_Maddah,Wang_Banawan_Ulukus_PSI,Tian_Sun_Chen_Upload,Sun_Jafar_PC, Mirmohseni_Maddah,Chen_Wang_Jafar_Search,Jia_Jafar_GXSTPIR,Lu_Jia_Jafar_DBTPIR}.
Much less is known about  Q2 and Q3, i.e., the fundamental limits of private-write, and joint read-write solutions from the information theoretic perspective.  Notably, Q3 has also been highlighted previously as an open problem by Kairouz et al in \cite{FLsurvey1}. 

The problem of privately reading and writing data from a distributed memory falls under the larger umbrella of Distributed Oblivious RAM (DORAM)\cite{Lu_DORAM} primitives in theoretical computer science and cryptography. With a few limited exceptions (e.g., a specialized $4$-server construction in  \cite{Kushilevitz_DORAM_Sub} that allows information theoretic privacy), prior studies of DORAM generally take a cryptographic perspective, e.g., privacy is guaranteed subject to computational hardness assumptions, and the number of memory blocks is assumed to be much larger than the size of each block. In contrast, the  focus of this work is on Q2 and Q3  under the stronger notion of information theoretic privacy. Furthermore,  because our motivation comes from federated submodel learning, the size of a submodel is assumed to be significantly larger than the number of submodels (see motivating examples in \cite{SFSM} and Section \ref{sec:numerical}). Indeed, this is a pervasive assumption in the growing body of literature on information theoretic PIR\cite{Sun_Jafar_PIR,Sun_Jafar_TPIR,Sun_Jafar_SPIR,Wang_Sun_Skoglund,Wang_Skoglund_SPIREve,Banawan_Ulukus_BPIR,Tajeddine_Gnilke_Karpuk_Hollanti,Wang_Sun_Skoglund_BSPIR,Tajeddine_Gnilke_Karpuk, Wang_Skoglund_TSPIR, FREIJ_HOLLANTI, Sun_Jafar_MDSTPIR, Jia_Jafar_MDSXSTPIR,Zhou_Tian_Sun_Liu_Min_Size,Yang_Shin_Lee,Jia_Sun_Jafar_XSTPIR,Banawab_Arasli_Wei_Ulukus_Heterogeneous,Wei_Arasli_Banawan_Ulukus_Decentralized,Attia_Kumar_Tandon,Wei_Banawan_Ulukus, Tandon_CachePIR, Li_Gastpar_SSMUPIR,Wei_Banawan_Ulukus_Side,Chen_Wang_Jafar_Side,Sun_Jafar_MPIR,Yao_Liu_Kang_Multiround,Banawan_Ulukus_MPIR, Shariatpanahi_Siavoshani_Maddah,Wang_Banawan_Ulukus_PSI,Tian_Sun_Chen_Upload,Sun_Jafar_PC, Mirmohseni_Maddah,Chen_Wang_Jafar_Search,Jia_Jafar_GXSTPIR,Lu_Jia_Jafar_DBTPIR}. In a broad sense, our problem formulation in this paper is  motivated by applications of (information theoretic) PIR that also require private writes. There is no shortage of such applications, e.g., a distributed database of medical records that not only allows a physician to privately download the desired record (private read) but also to update the record with new information (private write), or a banking service that would similarly allow private reads and writes of financial records from authorized entities. Essentially, while FSL serves as our nominal application of interest based on prior works that motivated this effort, our problem formulation is broad enough to capture various distributed file systems that enable the users to read and write files without revealing the identity of the target file. The files are viewed as submodels, and the assumption that the size of the file is significantly larger than the number of the files captures the nature of file systems that are most relevant to this work.

{\it Overview:} We consider the federated submodel learning setting where the global model is partitioned into $K$ submodels, and stored among $N$ distributed servers according to an $X$-secure threshold secret sharing scheme, i.e., any set of up to $X$ colluding servers can learn nothing about the stored models, while the full model can be recovered from the data stored by any $X+K_c$ servers. One at a time, users update the submodel most relevant to their local training data. The updates must be $T$-private, i.e., any set of up to $T$ colluding servers must not learn anything about which submodel is being updated. The contents of the updates must be $X_{\Delta}$-secure, i.e., any set of up to $X_{\Delta}$ colluding servers must learn nothing about the contents of the submodel updates. The size of a submodel is significantly larger than the number of submodels, which is significantly larger than $1$, i.e., $L\gg K\gg 1$ where $L$  is the size of a submodel and $K$ is the number of submodels. Due to uncertainties of the servers' I/O states, link states, etc., an important concern in distributed systems is to allow resilience against servers that may temporarily drop out \cite{Kadhe_Fastsecagg,Lee_Lam_Pedarsani, Tandon_Lei_Dimakis_Karampatziakis,Chen_Wang_Charles_Papailiopoulos, Sohn_Han_Choi_Moon}. To this end, we assume that at each time $t, t\in\mathbb{N}$, a subset of  servers may be unavailable. These unavailable servers are referred to as read-dropout servers or write-dropout servers depending on whether the user intends to perform the private read or the private write operation. Since the set of dropout servers changes over time, and is assumed to be known to the user, the private read and write schemes must \emph{adapt} to the set of currently available servers. Note that this is different from the problem of stragglers in massive distributed computing applications where the set of responsive servers is not known in advance, because servers may become unavailable \emph{during} the lengthy time interval required for their  local computations. Since our focus is not on massive computing applications, the server side processing needed for private read and write is not as time-consuming. So the availabilities, which are determined in advance by the user before initiating the read or write operation, e.g., by pinging the servers, are not expected to change during the read or write operation. We do allow the server availabilities to change between the read and write operations due to the delay introduced by the intermediate processing that is needed at the user to generate his updated submodel. A somewhat surprising aspect of private write with unavailable servers is that even though the data at the unavailable servers cannot be updated, the collective storage at all servers (including the unavailable ones) must represent the updated models. The  redundancy in coded storage and the $X$-security constraints which require that the stored information at any $X$ servers is independent of the data, are essential in this regard.

Since the private-read problem (Q1) is essentially a form of PIR, our starting point is the $X$-secure $T$-private information retrieval scheme (XSTPIR) of \cite{Jia_Sun_Jafar_XSTPIR}. In particular, we build on the idea of cross-subspace alignment (CSA) from \cite{Jia_Sun_Jafar_XSTPIR}, and introduce a new private read-write scheme, called Adaptive CSA-RW (ACSA-RW) as an answer to Q1 and Q2. To our knowledge ACSA-RW is the first \emph{communication-efficient} private federated submodel learning scheme that achieves information-theoretically \emph{perfect} privacy. ACSA-RW also answers Q3 in the affirmative as it exploits  query structure from the private-read operation to reduce the communication cost for the private-write operation. The  evidence of synergistic gain in ACSA-RW from a joint design of submodel retrieval and submodel aggregation addresses the corresponding open problem highlighted in Section 4.4.4 of \cite{FLsurvey1}. The observation that the ACSA-RW scheme takes  advantage of  storage redundancy for private read and private write is indicative of  fundamental tradeoffs between  download cost, upload cost, data security level, and storage redundancy for security and recoverability. In particular, the storage redundancy for $X$-security is exploited by private write, while the storage redundancy for robust  recoverability is used for private read (see Theorem \ref{thm:acsarw} and Section \ref{sec:thresholds} for details). It is also remarkable that the ACSA-RW scheme requires absolutely no user-user communication, even though the server states change over time and the read-write operations are adaptive. In other words, a user is not required to be aware of the history of previous updates and the previous availability states of the servers (see Section \ref{sec:howtowrite} for details). The download cost and the upload cost achieved by each user is an increasing function of the number of unavailable servers at the time. When more servers are available, the download cost and the upload costs are reduced, which provides elastic dropout resilience (see Theorem \ref{thm:acsarw}). To this end, the ACSA-RW scheme uses adaptive MDS-coded answer strings. This idea originates from  CSA code constructions for the problem of coded distributed batch computation\cite{Jia_Jafar_CDBC}. In terms of comparisons against available baselines, we note (see Section \ref{sec:kim}) that  ACSA-RW improves significantly in both the communication efficiency and the level of privacy compared to \cite{Kim_Lee_FSL}. In fact, ACSA-RW achieves asymptotically optimal download cost when $X\geq X_{\Delta}+T$, and is  order-wise optimal in terms of the upload cost.  Compared with the $4$ server construction of information theoretic DORAM in \cite{Kushilevitz_DORAM_Sub}, (where $X=1, T=1, X_{\Delta}=0, N=4$)  ACSA-RW has better communication efficiency (the assumption of $L\gg K$ is important in this regard). For example, as the ratio $L/K$ approaches infinity, ACSA-RW achieves total communication cost (i.e., the summation of the download cost and the upload cost, normalized by the submodel size) of $6$, versus the communication cost of $8$ achieved by the construction in \cite{Kushilevitz_DORAM_Sub}.

{\it Notation: }  Bold symbols are used to denote vectors and matrices, while calligraphic symbols denote sets. By convention, let the empty product be the multiplicative identity, and the empty sum be the additive identity. For two positive integers $M, N$ such that $M\leq N$, $[M:N]$ denotes the set $\{M,M+1,\cdots,N\}$. We use the shorthand notation $[N]$ for $[1:N]$. $\mathbb{N}$ denotes the set of positive integers $\{1,2,3,\cdots\}$, and $\mathbb{Z}^*$ denotes the set $\mathbb{N}\cup\{0\}$. For a subset of integers $\mathcal{N}\subset\mathbb{N}$, $\mathcal{N}(i), i\in[|\mathcal{N}|]$ denotes its $i^{th}$ element, sorted in ascending order. The notation $\diag(\mathbf{D}_1,\mathbf{D}_2,\cdots,\mathbf{D}_n)$ denotes the block diagonal matrix, i.e., the main-diagonal blocks are square matrices $(\mathbf{D}_1,\mathbf{D}_2,\cdots,\mathbf{D}_n)$ and all off-diagonal blocks are zero matrices. For a positive integer $K$, $\mathbf{I}_K$ denotes the $K\times K$ identity matrix. For two positive integers $k,K$ such that $k\leq K$, $\mathbf{e}_K(k)$ denotes the $k^{th}$ column of the $K\times K$ identity matrix. The notation $\widetilde{\mathcal{O}}(a\log^2 b)$ suppresses\footnote{There is another standard definition of the notation $\widetilde{\mathcal{O}}$ which fully suppresses polylog terms, i.e, $\mathcal{O}(a\polylog(b))$ is represented by $\widetilde{\mathcal{O}}(a)$, regardless of the exact form of $\polylog(b)$. The definition used in this paper emphasizes the dominant factor in the polylog term.} polylog terms. It may be replaced with $\mathcal{O}(a\log^2 b)$ if the field supports the Fast Fourier Transform (FFT), and with  $\mathcal{O}(a\log^2b\log\log(b))$ if it does not\footnote{If the FFT is not supported by the field, Schönhage--Strassen algorithm\cite{Schonhage_Strassen_Schnelle} can be used for fast algorithms that require convolutions, with an extra factor of $\log\log b$ in the complexity.}.

\begin{figure}[!htbp]
    \centering
    \begin{subfigure}{1\columnwidth}
        \centering
        \begin{tikzpicture}[xscale=0.8,yscale=1]
            \node [draw, cylinder,aspect=0.2,shape border rotate=90,fill=teal!10, text=black, inner sep =0cm,minimum width=2.5cm] (S1) at (3.5cm, -1.5cm) {\footnotesize\begin{tabular}{c}Server $1$\\$\mathbf{S}_1^{(t-1)}$\end{tabular}};

            \node [draw, cylinder,aspect=0.2,shape border rotate=90,fill=teal!10, text=black, inner sep =0cm,minimum width=2.5cm] (S2) at (7.5cm, -1.5cm) {\footnotesize\begin{tabular}{c}Server $2$\\$\mathbf{S}_2^{(t-1)}$\end{tabular}};

            \node [rectangle, inner sep =0cm] (Ddots2) at (10cm, -1.5cm) {$\cdots$};

            \node [draw, cylinder,aspect=0.2,shape border rotate=90,fill=teal!10, text=black, inner sep =0cm,minimum width=2.5cm] (Si) at (12.5cm, -1.5cm) {\footnotesize\begin{tabular}{c}Server $i$\\$\mathbf{S}_i^{(t-1)}$\end{tabular}};

            \node [rectangle, inner sep =0cm] (Ddots2) at (15cm, -1.5cm) {$\cdots$};

            \node [draw, cylinder,aspect=0.2,shape border rotate=90,fill=teal!10, text=black, inner sep =0cm,minimum width=2.5cm] (SN) at (17.5cm, -1.5cm) {\footnotesize\begin{tabular}{c}Server $N$\\$\mathbf{S}_N^{(t-1)}$\end{tabular}};

            \node[alice, draw=black, text=black, minimum size=0.8cm, inner sep=0] (U) at (10.5cm, -5cm) {User $t$};

            \draw [olive!60!black, thick, arrows = {-Stealth[left]}] ([xshift=-10pt]U.north west) to node[below, sloped] {\footnotesize $Q_{1}^{(t,\theta_{t})}$} ([xshift=-10pt]S1.south);
            \draw [olive!60!black, thick, arrows = {-Stealth[left]}] ([xshift=-2pt]U.north) to node[below, sloped] {\footnotesize $Q_{2}^{(t,\theta_{t})}$} ([xshift=-2pt]S2.south);
              \draw [red, thick, dashed, arrows = {-Stealth[left]}] ([xshift=-2pt]U.north) to node[left] {\footnotesize $\O$} ([xshift=-2pt]Si.south);
            \draw [olive!60!black, thick, arrows = {-Stealth[right]}] ([xshift=10pt]U.north east) to node[below, sloped] {\footnotesize $Q_{N}^{(t,\theta_{t})}$} ([xshift=10pt]SN.south);

            \draw [red!60!black, thick, arrows = {-Stealth[left]}] (S1.south) to node[above, sloped] {\footnotesize $A_{1}^{(t,\theta_{t})}$} (U.north west);
            \draw [red!60!black, thick, arrows = {-Stealth[left]}] ([xshift=2pt]S2.south) to node[above, sloped] {\footnotesize $A_{n}^{(t,\theta_{t})}$} ([xshift=2pt]U.north);
            
            \draw [red, dashed, thick, arrows = {-Stealth[left]}] ([xshift=2pt]Si.south) to node[right] {\footnotesize $\O$} ([xshift=2pt]U.north);
            \draw [red!60!black, thick, arrows = {-Stealth[right]}] (SN.south) to node[above, sloped] {\footnotesize $A_{N}^{(t,\theta_{t})}$} (U.north east);

            \node[below=0.7cm of U, minimum size=0.3cm, inner sep=0.1cm] (P) {\small$\mathbf{W}_{\theta_t}^{(t-1)}=\left[W_{\theta_t}^{(t-1)}(1),W_{\theta_t}^{(t-1)}(2),\cdots,W_{\theta_t}^{(t-1)}(L)\right]^{\mathsf{T}}$};
            \node[left=0.5cm of U, minimum size=0.3cm, inner sep=0.1cm] (Q) {\small$\left(\theta_t, \mathcal{Z}_U^{(t)}\right)$};
            \draw[->]([yshift=-2pt]U.base)--(P);
            \draw[->](Q)--(U);
        \end{tikzpicture}
        \caption{The private-read phase of robust XSTPFSL. The $i^{th}$ server is unavailable, $i\in\mathcal{S}_r^{(t)}$.}
        \label{fig:XSTPFSLread}
        \vspace{0.5cm}
    \end{subfigure}

    \begin{subfigure}{1\columnwidth}
        \centering
        \begin{tikzpicture}[xscale=0.8,yscale=1]
            \node [draw, cylinder,aspect=0.2,shape border rotate=90,fill=teal!10, text=black, inner sep =0cm,minimum width=2.5cm] (S1) at (3.5cm, -1.5cm) {\footnotesize\begin{tabular}{c}Server $1$\\$\mathbf{S}_1^{(t-1)}$\end{tabular}};

            \node [draw, cylinder,aspect=0.2,shape border rotate=90,fill=teal!10, text=black, inner sep =0cm,minimum width=2.5cm] (S2) at (7.5cm, -1.5cm) {\footnotesize\begin{tabular}{c}Server $2$\\$\mathbf{S}_2^{(t-1)}$\end{tabular}};

            \node [rectangle, inner sep =0cm] (Ddots2) at (10cm, -1.5cm) {$\cdots$};

            \node [draw, cylinder,aspect=0.2,shape border rotate=90,fill=teal!10, text=black, inner sep =0cm,minimum width=2.5cm] (Si) at (12.5cm, -1.5cm) {\footnotesize\begin{tabular}{c}Server $j$\\$\mathbf{S}_j^{(t-1)}$\end{tabular}};

            \node [rectangle, inner sep =0cm] (Ddots2) at (15cm, -1.5cm) {$\cdots$};

            \node [draw, cylinder,aspect=0.2,shape border rotate=90,fill=teal!10, text=black, inner sep =0cm,minimum width=2.5cm] (SN) at (17.5cm, -1.5cm) {\footnotesize\begin{tabular}{c}Server $N$\\$\mathbf{S}_N^{(t-1)}$\end{tabular}};

            \node[alice, draw=black, text=black, minimum size=0.8cm, inner sep=0] (U) at (10.5cm, -5cm) {User $t$};

            \draw [red!60!black, thick, arrows = {-Stealth[left]}] (U.north west) to node[below, sloped] {\footnotesize $P_{1}^{(t,\theta_{t})}$} (S1.south);
            \draw [red!60!black, thick, arrows = {-Stealth[left]}] (U.north) to node[below, sloped] {\footnotesize $P_{2}^{(t,\theta_{t})}$} (S2.south);
            \draw [red, thick, dashed, arrows = {-Stealth[left]}] (U.north) to node[left] {\footnotesize $\O$} (Si.south);
            \draw [red!60!black, thick, arrows = {-Stealth[right]}] (U.north east) to node[below, sloped] {\footnotesize $P_{N}^{(t,\theta_{t})}$} (SN.south);

            \node[below=0.7cm of U, minimum size=0.3cm, inner sep=0.1cm] (Q) {\small$\left(\theta_t, \mathbf{\Delta}_t, \left(A_{n}^{(t,\theta_t)}\right)_{n\in[N]}, \mathcal{Z}_U^{(t)}\right)$};
            \draw[->](Q)--([yshift=-2pt]U.base);
        \end{tikzpicture}
        \caption{The private-write phase of robust XSTPFSL. The $j^{th}$ server is unavailable, $j\in\mathcal{S}_w^{(t)}$. Note that Server $i$ and Server $j$ may be different servers.}
        \label{fig:XSTPFSLwrite}
    \end{subfigure}
    \caption{The two phases of robust $X$-Secure $T$-Private Federated Submodel Learning (XSTPFSL) with arbitrary realizations of unavailable servers.}
    \label{fig:XSTPFSL}
\end{figure}
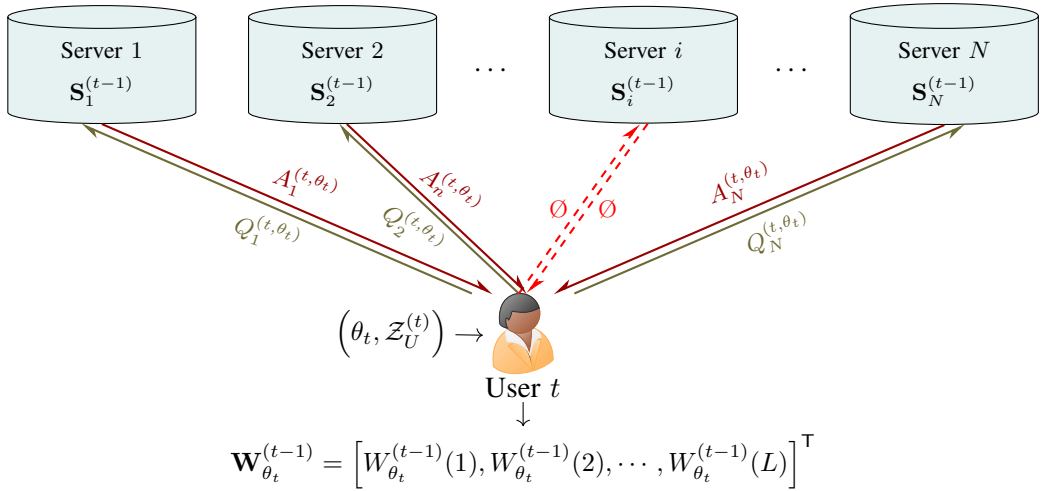
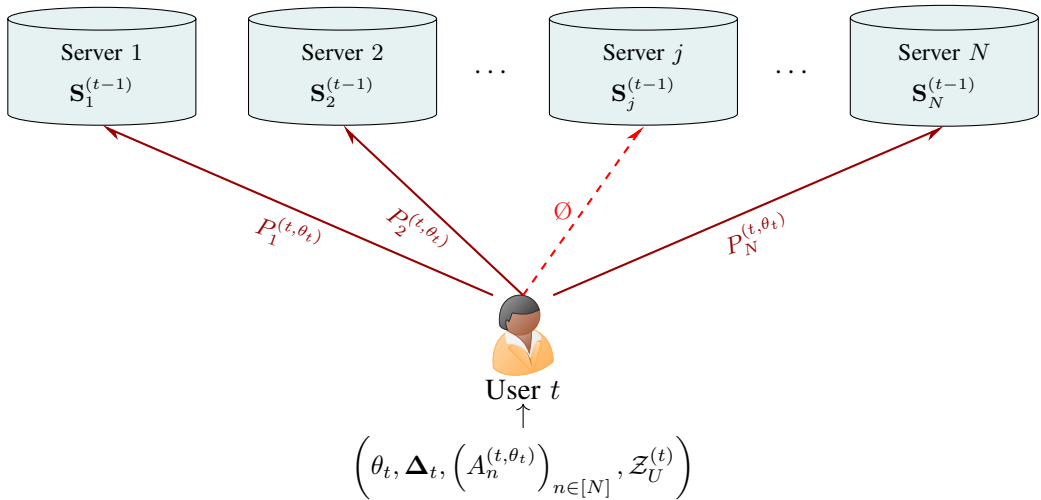
\section{Problem Statement: Robust XSTPFSL}\label{sec:probstat}
Consider $K$ initial submodels $\left(\mathbf{W}_1^{(0)}, \mathbf{W}_2^{(0)},\cdots,\mathbf{W}_K^{(0)}\right)$, each of which consists of $L$ uniformly i.i.d.\footnote{Note that the proposed ACSA-RW scheme does not require the assumption of uniformly i.i.d. model data to be correct, secure and private. The assumption is mainly used for converse arguments and communication cost metrics. However, we note that the uniformly i.i.d. model data assumption is in fact {\it not} very strong because submodel learning is performed locally, and it may be possible to achieve (nearly) uniformly i.i.d. model data by exploiting entropy encoding.} random symbols from a finite field\footnote{Note that the size of the finite field required is not too large. By our ACSA-RW scheme (see Theorem \ref{thm:acsarw}), it is sufficient to choose a finite field with $q\geq 2N$. Besides, finite field symbols are used as a representation of the submodels. Since the submodels are trained by the user {\it locally}, without loss of generality we can assume finite field representations.} $\mathbb{F}_q$. In particular, we have
\begin{align}
    H\left(\mathbf{W}_{1}^{(0)},\mathbf{W}_{2}^{(0)},\cdots,\mathbf{W}_{K}^{(0)}\right)=KL,
\end{align}
in $q$-ary units. Time slots are associated with users and their corresponding submodel updates, i.e., at time slot $t,t\in\mathbb{N}$, User $t$ wishes to perform the $t^{th}$ submodel update. At time $t, t\in\mathbb{Z}^*$, the $K$ submodels are denoted as $\left(\mathbf{W}_1^{(t)}, \mathbf{W}_2^{(t)},\cdots,\mathbf{W}_K^{(t)}\right)$. The submodels are represented as vectors, i.e., for all $t\in\mathbb{Z}^*$, $k\in[K]$,
\begin{align}
    \mathbf{W}_{k}^{(t)}=\left[W_{k}^{(t)}(1),W_{k}^{(t)}(2),\cdots,W_{k}^{(t)}(L)\right]^{\mathsf{T}}.
\end{align}
The $K$ submodels are  distributively stored among the $N$ servers. The storage at server $n, n\in[N]$ at time $t, t\in\mathbb{Z}^*$ is denoted as $\mathbf{S}_n^{(t)}$. Note that $\mathbf{S}_n^{(0)}$ represents the initial storage.

A full cycle of robust XSTPFSL is comprised of two phases --- the read phase and the write phase. At the beginning of the cycle,  User $t$ privately generates the desired index $\theta_t$, uniformly from $[K]$, and the user-side randomness $\mathcal{Z}_U^{(t)}$, which is intended to protect the user's privacy and security. In the read phase, User $t$ wishes to retrieve the submodel $\mathbf{W}_{\theta_t}^{(t-1)}$. At all times, nothing must be revealed about any (current or past) desired indices $(\theta_1,\theta_2,\cdots,\theta_t)$ to any set of up to $T$ colluding servers. To this end, User $t$ generates the read-queries $\left(Q_1^{(t,\theta_t)},Q_2^{(t,\theta_t)},\cdots,Q_N^{(t,\theta_t)}\right)$, where $Q_n^{(t,\theta_t)}$ is intended for the $n^{th}$ server, such that,
\begin{align}
    H\left(Q_n^{(t,\theta_t)}~\bigg|~ \theta_t, \mathcal{Z}_U^{(t)}\right) & =0, & \forall n\in[N].
\end{align}

Each of the currently available servers $n, n\in[N]\setminus\mathcal{S}_r^{(t)}$ is sent the query $Q_n^{(t,\theta_t)}$ by the user, and responds to the user with an answer $A_n^{(t,\theta_t)}$, such that,
\begin{align}
    H\left(A_n^{(t,\theta_t)}~\bigg|~ \mathbf{S}_{n}^{(t-1)}, Q_n^{(t,\theta_t)}\right) & =0, & \forall n\in[N]\setminus\mathcal{S}_r^{(t)}.
\end{align}
From the answers returned by the servers $n, n\in[N]\setminus\mathcal{S}_r^{(t)}$, the user must be able to reconstruct the desired submodel $\mathbf{W}_{\theta_t}^{(t-1)}$.
\begin{align}
     & \colorbox{black!10}{[Correctness]}\label{eq:correctness}\quad H\left(\mathbf{W}_{\theta_t}^{(t-1)} ~\bigg|~ \left(A_n^{(t,\theta_t)}\right)_{n\in[N]\setminus\mathcal{S}_r^{(t)}}, \left(Q_n^{(t,\theta_t)}\right)_{n\in[N]},\theta_t\right)=0.\notag
\end{align}
This is the end of the read phase.

Upon finishing the local submodel training, User $t$ privately generates an increment\footnote{In general, the increment is the difference between the new submodel and the old submodel. We note that no generality is lost by assuming additive increments because the submodel training is performed locally.} for the $\theta_t^{th}$ submodel. The increment is represented as a vector, $\mathbf{\Delta}_t=[\Delta^{(t)}_1,\Delta^{(t)}_2,\cdots,\Delta^{(t)}_L]^{\mathsf{T}}$, which consists of $L$ i.i.d. uniformly distributed symbols from the finite field $\mathbb{F}_q$, i.e., $H(\mathbf{\Delta}_t)=L$ in $q$-ary units. The increment $\mathbf{\Delta}_t$ is intended to update the $\theta_t^{th}$ submodel $\mathbf{W}_{\theta_t}^{(t-1)}$, such that the next user who wishes to make an update, User $t+1$, is able to retrieve the submodel $\mathbf{W}_{\theta_t}^{(t)}=\mathbf{W}_{\theta_t}^{(t-1)}+\mathbf{\Delta}_t$ if $\theta_{t+1}=\theta_{t}$, and the submodel $\mathbf{W}_{\theta_t'}^{(t)}=\mathbf{W}_{\theta_t'}^{(t-1)}$ if $\theta_{t+1}=\theta_{t}'\neq\theta_t$. In other words, for all $t\in\mathbb{N}$, $k\in[K]$, the submodel $\mathbf{W}_{k}^{(t)}$ is defined recursively as follows.
\begin{align}
    \mathbf{W}_{k}^{(t)}=\left\{\begin{array}{cc}
        \mathbf{W}_{k}^{(t-1)}+\mathbf{\Delta}_t & k=\theta_t,     \\
        \mathbf{W}_{k}^{(t-1)}                   & k\neq \theta_t.
    \end{array}\right.
\end{align}
User $t$ initializes the write phase by generating the write-queries $\left(P_1^{(t,\theta_t)}, P_2^{(t,\theta_t)}, \cdots, P_N^{(t,\theta_t)}\right)$. For ease of notation, let the write-queries be nulls for the write-dropout servers, i.e., $P_n^{(t,\theta_t)}=\O$ for $ n\in\mathcal{S}_w^{(t)}$. For all $n\in[N]$,
\begin{align}
    H\left(P_n^{(t,\theta_t)}~\bigg|~ \theta_t, \mathcal{Z}_U^{(t)},\left(A_{n}^{(t,\theta_t)}\right)_{n\in[N]},\mathbf{\Delta}_t\right) & =0.
\end{align}

The user sends the write-query $P_n^{(t,\theta_t)}$ to the $n^{th}$ server, $n\in[N]\setminus\mathcal{S}_w^{(t)}$, if the 
server was available in the read-phase and therefore already received the read-query. Otherwise, if the server was not available during the read phase, then the user sends\footnote{For ease of exposition we will assume in the description of the scheme that during the read phase, the read queries are sent to \emph{all} servers that are available during the read phase, or will become available later during the write phase, so that only the write-queries need to be sent during the write phase.}   both read and write queries $(Q_n^{(t,\theta_t)}, P_n^{(t,\theta_t)})$. 
Still, any set of up to $T$ colluding servers must learn nothing about the desired indices $(\theta_1,\theta_2,\cdots,\theta_t)$. 

Upon receiving the write-queries, each of the servers $n, n\in[N]\setminus\mathcal{S}_w^{(t)}$ updates its storage based on the existing storage $\mathbf{S}_{n}^{(t-1)}$ and the queries for the two phases $\left(P_{n}^{(t,\theta_t)},Q_{n}^{(t,\theta_t)}\right)$, i.e.,
\begin{align}
    H\left(\mathbf{S}_{n}^{(t)}~\bigg|~ \mathbf{S}_{n}^{(t-1)},P_{n}^{(t,\theta_t)},Q_{n}^{(t,\theta_t)}\right) & =0.
\end{align}
On the other hand, the write-dropout servers are unable to perform any storage update.
\begin{align}
    \mathbf{S}_{n}^{(t)}=\mathbf{S}_{n}^{(t-1)} & , & \forall n\in\mathcal{S}_w^{(t)}.
\end{align}

Next, let us formalize the security and privacy constraints. $T$-privacy  guarantees that at any time, any set of up to $T$ colluding servers learn nothing about the indices $(\theta_1,\theta_2,\cdots,\theta_t)$ from all the read and write queries and storage states.
\begin{align}
    \colorbox{black!10}{[$T$-Privacy]}\quad & I\left((\theta_{\tau})_{\tau\in[t]};\left(P_n^{(t,\theta_t)},Q_n^{(t,\theta_t)}\right)_{n\in\mathcal{T}}~\bigg|~\left(\mathbf{S}_{n}^{(\tau-1)},P_{n}^{(\tau-1,\theta_{\tau-1})},Q_{n}^{(\tau-1,\theta_{\tau-1})}\right)_{\tau\in[t],n\in\mathcal{T}}\right)=0,\notag \\
                                            & \forall \mathcal{T}\in[N],|\mathcal{T}|=T, t\in\mathbb{N},\label{eq:privacy}
\end{align}
where for all $n\in[N]$, we define $P_n^{(0,\theta_0)}=Q_n^{(0,\theta_0)}=\O$.

Similarly, any set of up to $X_{\Delta}$ colluding servers must learn nothing about the increments $(\mathbf{\Delta}_{1},\mathbf{\Delta}_{2},\cdots,\mathbf{\Delta}_{t})$.
\begin{align}
    \colorbox{black!10}{[$X_{\Delta}$-Security]}\quad & I\left(\left(\mathbf{\Delta}_{\tau}\right)_{\tau\in[t]};\left(P_n^{(t,\theta_t)}\right)_{n\in\mathcal{X}}~\bigg|~\left(\mathbf{S}_{n}^{(\tau-1)},P_{n}^{(\tau-1,\theta_{\tau-1})},Q_{n}^{(\tau,\theta_{\tau})}\right)_{\tau\in[t],n\in\mathcal{X}}\right)=0, \notag \\
                                                      & \forall \mathcal{X}\subset[N], |\mathcal{X}|=X_{\Delta}, t\in\mathbb{N}.
\end{align}

The storage at the $N$ servers is formalized according to a threshold secret sharing scheme. Specifically, the storage at any set of up to $X$ colluding servers must reveal nothing about the submodels. Formally,
\begin{align}
                                             & H\left(\mathbf{S}_n^{(0)}~\bigg|~\left(\mathbf{W}_{k}^{(0)}\right)_{k\in[K]}, \mathcal{Z}_S\right)=0, \forall n\in[N],                                                         \\
    \colorbox{black!10}{[$X$-Security]}\quad & I\left(\left(\mathbf{W}_{k}^{(t)}\right)_{k\in[K]};\left(\mathbf{S}_n^{(t)}\right)_{n\in\mathcal{X}}\right)=0, \forall \mathcal{X}\subset[N], |\mathcal{X}|=X, t\in\mathbb{Z}^*,
\end{align}
where for all $n\in[N]$, we define $\mathbf{S}_n^{-1}=\O$. Note that $\mathcal{Z}_S$ is the private randomness used by the secret sharing scheme that implements $X$-secure storage across the $N$ servers.

There is a subtle difference in the security constraint that we impose on the storage, and the previously specified security and privacy constraints on updates and queries. To appreciate this difference let us make a distinction between the notions of an internal adversary and an external adversary. We say that a set of colluding servers forms an internal adversary if those colluding servers have access to not only their current storage, but also their entire history of previous stored values and queries. Essentially the internal adversary setting represents a greater security threat because the servers themselves are dishonest and surreptitiously keep records of all their history in an attempt to learn from it. In contrast, we say that a set of colluding servers forms an external adversary if those colluding servers have access to only their current storage, but not to other historical information. Essentially, this represents an external adversary who is able to steal the current information from honest servers who do not keep records of their historical information. Clearly, an external adversary is weaker than an internal adversary. Now let us note that while the $T$-private queries and the $X_\Delta$-secure updates are protected against \emph{internal} adversaries, the $X$-secure storage is only protected against \emph{external} adversaries. This is mainly because we will generally assume $X>\max(X_\Delta, T)$, i.e., a higher security threshold for storage, than for updates and queries. Note that once the number of compromised servers exceeds $\max(X_\Delta,T)$, the security of updates and the privacy of queries is no longer guaranteed. In such settings the security of storage is still guaranteed, albeit in a weaker sense (against external adversaries). On the other hand if the number of compromised servers is small enough, then indeed secure storage may be guaranteed in a stronger sense, even against internal adversaries. We  refer the reader to Remark \ref{remark:extintadv}  for further insight into this aspect.

The independence among various quantities of interest is specified for all $t\in\mathbb{N}$ as follows.
\begin{align}
     & H\left(\left(\mathbf{W}_k^{(0)}\right)_{k\in[K]},\left(\mathbf{\Delta}_{\tau}\right)_{\tau\in[t]},\left(\theta_{\tau}\right)_{\tau\in[t]},\left(\mathcal{Z}_U^{(\tau)}\right)_{\tau\in[t]},\mathcal{Z}_S\right)\notag                                          \\
     & =H\left(\left(\mathbf{W}_k^{(0)}\right)_{k\in[K]}\right)+H\left(\left(\mathbf{\Delta}_{\tau}\right)_{\tau\in[t]}\right)+H\left(\left(\theta_{\tau}\right)_{\tau\in[t]}\right)+H\left(\left(\mathcal{Z}_U^{(\tau)}\right)_{\tau\in[t]}\right)+H(\mathcal{Z}_S).
\end{align}

To evaluate the performance of a robust XSTPFSL scheme, we consider the following metrics. The first two metrics focus on communication cost. For $t\in\mathbb{N}$, the download cost $D_t$ is the expected (over all realizations of queries) number of $q$-ary symbols downloaded by User $t$, normalized by $L$. The upload cost $U_t$ is the expected number of $q$-ary symbols uploaded by User $t$, also normalized by $L$. %
The next metric focuses on storage efficiency. For $t\in\mathbb{Z}^*$, the storage efficiency is defined as the ratio of the total data content to the total storage resources consumed by a scheme, i.e., $\eta^{(t)}=\frac{KL}{\sum_{n\in[N]} H\left(S_n^{(t)}\right)}$. If $\eta^{(t)}$ takes the same value for all $t\in\mathbb{Z}^*$, we use the compact notation $\eta$ instead.

\section{Main Result: The ACSA-RW Scheme for Private Read/Write}
The main contribution of this work is the ACSA-RW scheme, which allows private read and private write from $N$ distributed servers according to the problem statement provided in Section \ref{sec:probstat}. The scheme achieves storage efficiency $K_c/N$, i.e., it uses a total storage of $(KLN/K_c)\log_2 q$ bits across $N$ servers in order to store the $KL\log_2 q$ bits of actual data, where $K_c\in\mathbb{N}$, and allows arbitrary read-dropouts and write-dropouts as long as the number of dropout servers is less than the corresponding threshold values. The thresholds are defined below and their relationship to  redundant storage dimensions is explained in Section \ref{sec:thresholds}.
\begin{align}
\mbox{\bf Read-dropout threshold:} &&S_r^{\text{\normalfont thresh}}&\triangleq N-(K_c+X+T-1).\\
\mbox{\bf Write-dropout threshold:} &&S_w^{\text{\normalfont thresh}}&\triangleq X-(X_\Delta+T-1).
\end{align}
It is worth emphasizing that the ACSA-RW scheme  does not just tolerate  dropout servers, it adapts (hence the \emph{elastic} resilience) to the number of available servers so as to reduce its communication cost. The elasticity would be straightforward if the only concern was the private-read operation because known PIR schemes can be adapted to the number of available servers. What makes the elasticity requirement non-trivial is that the scheme must accommodate both private read and private write. The private-write requirements are particularly intriguing, almost paradoxical in that  the coded storage across all $N$ servers needs to be updated to be consistent with  the new submodels, even though some of those servers (the write-dropout servers) are unavailable, so their stored information cannot be changed. Furthermore, as server states continue to change over time, future updates need no knowledge of prior dropout histories. Also of interest are the tradeoffs between storage redundancy and the resources needed for private-read and private-write functionalities. Because many of these aspects  become more intuitively transparent when $L\gg K\gg 1$, the asymptotic setting is used to present the main result in Theorem \ref{thm:acsarw}.  In particular, by suppressing minor terms (which can be found in the full description of the scheme), the asymptotic setting  reveals an elegant symmetry between the upload and download costs. 
The remainder of this section is devoted to stating and then understanding the implications of Theorem \ref{thm:acsarw}. The scheme itself is presented in the form of the proof of Theorem \ref{thm:acsarw} in Section \ref{sec:acsarwproof}. 

\begin{theorem}\label{thm:acsarw}
In the limit $L/K\rightarrow \infty$, for all $t\in\mathbb{N}$
 the ACSA-RW scheme achieves the following download, upload cost pair $(D_t, U_t)$ and storage efficiency $\eta$:
\begin{align}
       (D_t, U_t)&=\left(\frac{N-|\mathcal{S}_r^{(t)}|}{S_r^{\text{\normalfont thresh}}-|\mathcal{S}_r^{(t)}|}, ~\frac{N-|\mathcal{S}_w^{(t)}|}{S_w^{\text{\normalfont thresh}}-|\mathcal{S}_w^{(t)}|}\right),&& \eta=\frac{K_c}{N}, \label{eq:ACSAachi}
\end{align}
for any $K_c\in\mathbb{N}$ such that $|\mathcal{S}_r^{(t)}|<S_r^{\text{\normalfont thresh}}$, $|\mathcal{S}_w^{(t)}|<S_w^{\text{\normalfont thresh}}$, and the  field size $q\geq N+\max\left\{S_r^{\text{\normalfont thresh}}, S_w^{\text{\normalfont thresh}}, K_c\right\}$.
\end{theorem}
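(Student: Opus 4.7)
The plan is to prove Theorem \ref{thm:acsarw} by explicit construction: extend the CSA storage framework of \cite{Jia_Sun_Jafar_XSTPIR} so that the user's queries carry $T$-private, $X_\Delta$-secure payloads capable of implementing both a private read and a subsequent private write whose footprints on the $N$ servers ``align'' into low-dimensional interference spaces. First I would fix $K_c$ distinct constants $f_1,\ldots,f_{K_c}$ and $N$ distinct constants $\alpha_1,\ldots,\alpha_N$ in $\mathbb{F}_q$ (the field-size requirement is precisely what lets us pick $N+\max\{S_r^{\text{thresh}},S_w^{\text{thresh}},K_c\}$ distinct evaluation points). Split each submodel into $K_c$ equal chunks $\mathbf{W}_k^{(t)}(j)$ of length $L/K_c$, and define the initial storage at server $n$ as $\mathbf{S}_{n,k}^{(0)}=\sum_{j=1}^{K_c}\frac{1}{f_j-\alpha_n}\mathbf{W}_k^{(0)}(j)+\sum_{x=1}^{X}\alpha_n^{x-1}\mathbf{Z}_k(x)$ for each $k\in[K]$. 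This single step gives $X$-security of the initial storage and yields $\eta=K_c/N$ from counting symbols.

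For the private read, I would follow the XSTPIR template: the query to server $n$ is a $K$-vector whose $k$-th coordinate encodes $\mathbf{e}_K(\theta_t)$ at the $f_j$'s, padded by $T$ random $T$-privacy terms through powers of $\alpha_n$. The inner product with $\mathbf{S}_{n,k}^{(t-1)}$ produces, by the CSA identity, a useful signal in dimensions indexed by $\{f_j\}$ plus interference occupying a space of dimension $K_c+X+T-1$. To achieve elasticity, I would batch the $L/K_c$ symbols per chunk into rounds and send MDS-coded answers across the $N-|\mathcal{S}_r^{(t)}|$ available servers: standard Reed--Solomon interpolation recovers the signal from any $S_r^{\text{thresh}}-|\mathcal{S}_r^{(t)}|$ useful dimensions per round, which after the asymptotic limit $L/K\to\infty$ yields exactly $D_t=(N-|\mathcal{S}_r^{(t)}|)/(S_r^{\text{thresh}}-|\mathcal{S}_r^{(t)}|)$. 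Correctness follows from non-vanishing Cauchy-type determinants, and $T$-privacy from the independent padding terms.

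The private-write phase is where the main obstacle lies: the user must mask an additive update into $\mathbf{\Delta}_t\cdot\mathbf{e}_K(\theta_t)$ of the storage polynomial while (i) keeping $\theta_t$ $T$-private, (ii) keeping $\mathbf{\Delta}_t$ $X_\Delta$-secure, and (iii) ensuring that the unchanged storages at the $|\mathcal{S}_w^{(t)}|$ dropout servers are still consistent with the new $X$-secure code representing $\mathbf{W}_k^{(t)}$. My plan is to reuse the CSA structure of the read query: the write-query $P_{n,k}^{(t,\theta_t)}$ is built as an $\mathbb{F}_q$-linear combination whose value at each $f_j$ equals $\mathbf{\Delta}_t(j)\,\mathbf{e}_K(\theta_t)/(f_j-\alpha_n)$ plus $X_\Delta+T-1$ freshly generated noise/privacy terms carried on powers of $\alpha_n$. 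The remaining $X-(X_\Delta+T-1)=S_w^{\text{thresh}}$ degrees of freedom in the noise layer are then set by a system of linear constraints that forces $P_{n,k}^{(t,\theta_t)}=0$ for every $n\in\mathcal{S}_w^{(t)}$; this is solvable because it is an MDS system of $|\mathcal{S}_w^{(t)}|<S_w^{\text{thresh}}$ equations in $S_w^{\text{thresh}}$ unknowns. Each available server then simply sets $\mathbf{S}_{n,k}^{(t)}=\mathbf{S}_{n,k}^{(t-1)}+P_{n,k}^{(t,\theta_t)}$; because the dropout constraints hold, the unchanged stores agree with the same polynomial, preserving the $X$-secure CSA form for future rounds (memorylessness). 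The per-round MDS batching across the $N-|\mathcal{S}_w^{(t)}|$ available servers delivers $U_t=(N-|\mathcal{S}_w^{(t)}|)/(S_w^{\text{thresh}}-|\mathcal{S}_w^{(t)}|)$ in the asymptotic limit.

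Finally I would verify the three security/privacy constraints using standard mutual-information arguments: $T$-privacy of $\theta_t$ holds because the $T$-privacy pad makes any $T$ query/write views uniformly distributed; $X_\Delta$-security of $\mathbf{\Delta}_t$ follows because the write-query is masked by $X_\Delta$ fresh random terms visible as independent uniform noise to any $X_\Delta$ servers; $X$-security of storage is preserved inductively since each update adds a polynomial whose first $X$ coefficient directions are uniformly random and independent across updates. I expect the main difficulty to be the write step in the presence of arbitrary dropouts, where one must simultaneously satisfy the zero-constraints at dropout servers, the alignment constraints at the $f_j$'s, and the security/privacy randomness structure --- the counting inequality $|\mathcal{S}_w^{(t)}|<S_w^{\text{thresh}}$ is exactly what makes the underlying Vandermonde/Cauchy system non-singular and thus solvable for every realization of dropouts.
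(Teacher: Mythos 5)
Your read-phase plan is essentially the paper's (CSA-coded storage, Cauchy--Vandermonde interpolation, adaptive MDS answer batching), and your idea of absorbing the dropout constraint into the noise layer is recognizable as a user-side reformulation of the paper's server-side \emph{ACSA null-shaper} $\mathbf{\Omega}_{n,i}^{(t)}$, which is a publicly-known polynomial that vanishes at $\alpha_n$ for $n\in\mathcal{S}_w^{(t)}$. Those parts are sound. The genuine gap is in the write-phase upload cost. As you describe it, the write-query $P_{n,k}^{(t,\theta_t)}$ is a secret share of $\mathbf{\Delta}_t(j)\,\mathbf{e}_K(\theta_t)$, i.e., a $K$-dimensional object per chunk, and the available server is instructed to simply add it to its storage. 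Over the $L/K_c$ chunks this is $\Theta(KL/K_c)$ symbols per available server, so the normalized upload cost becomes $\Theta\bigl(K(N-|\mathcal{S}_w^{(t)}|)/K_c\bigr)$, which grows with $K$ and does not reduce to $(N-|\mathcal{S}_w^{(t)}|)/(S_w^{\text{thresh}}-|\mathcal{S}_w^{(t)}|)$ in the limit $L/K\to\infty$.

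What is missing is precisely the ``synergy'' the theorem relies on: the one-hot vector $\mathbf{e}_K(\theta_t)$ is never re-uploaded during the write phase. The servers already hold the read query $\mathbf{Q}_{n,i}^{(t,\theta_t)}$, which is a $T$-private secret share of $\mathbf{e}_K(\theta_t)$, so the write upload $\mathbf{P}_{n,i}^{(t,\theta_t)}$ consists only of \emph{scalar} increment shares $\widetilde{\Delta}_{\ell,i}^{(t)}$ (one $X_\Delta$-secured scalar per round per chunk). Each server then forms the product $\mathbf{\Omega}_{n,i}^{(t)}\mathbf{\Upsilon}_{n,i}^{(t)}\mathbf{P}_{n,i}^{(t,\theta_t)}\mathbf{Q}_{n,i}^{(t,\theta_t)}$ and adds it to $\mathbf{S}_n^{(t-1)}$. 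It is the product of the scalar increment with the $K$-vector query that generates the $K$-vector increment $\Delta\,\mathbf{e}_K(\theta_t)$ inside the server, at zero communication cost, and it is the degree accounting $X = X_\Delta + R_w^{(t)} + |\mathcal{S}_w^{(t)}| + T - 1$ of that product that makes the interference fit in the $X$ redundancy dimensions. Without this multiplicative reuse, your upload cost claim is off by a factor of $K$. (A secondary point: to keep the read-query upload from scaling with $L$, the paper uses a cyclic pole/noise assignment so that $H(Q_n^{(t,\theta_t)})=\mu K_c K$ rather than $J K_c K$; you would want to include this if $K$ is also allowed to grow.)
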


\subsection{Observations}
\subsubsection{Storage Redundancy and Private Read/Write Thresholds}\label{sec:thresholds}
The relationship between redundant storage dimensions and the private read/write thresholds is conceptually illustrated in Figure \ref{fig:pertrade}.
          \begin{figure}[h]
              \centering
              \begin{tikzpicture}[xscale=0.7,yscale=0.9]

                  \draw[color=black,very thick,] (-1,0) rectangle (1,1.5) node[fill=white,pos=.5] {\footnotesize $K_c$};
                  \draw[color=black,very thick,pattern=north east lines,, pattern color=red!70!black] (1+10,0) rectangle (4+10,1.5) node[fill=white,pos=.5] {\footnotesize $T-1$};
                  \draw[color=black,very thick,pattern=north east lines, pattern color=red!70!black] (4+10,0) rectangle (6+10,1.5) node[fill=white,pos=.5] {\footnotesize $|\mathcal{S}_r^{(t)}|$};
                  \draw[color=black,very thick,pattern=north east lines, pattern color=red!70!black] (6+10,0) rectangle (10+10,1.5) node[fill=white,pos=.5] {\footnotesize $S_r^{\text{\normalfont thresh}}-|\mathcal{S}_r^{(t)}|$};
                  \draw[color=black,very thick,pattern=north west lines,pattern color=teal!70!black] (10-9,0) rectangle (12-9,1.5) node[fill=white,pos=.5] {\footnotesize $X_\Delta$};
                  \draw[color=black,very thick,pattern=north west lines,pattern color=teal!70!black] (12-9,0) rectangle (14-9,1.5) node[fill=white,pos=.5] {\footnotesize $T-1$};
                  \draw[color=black,very thick,pattern= north west lines,pattern color=teal!70!black] (14-9,0) rectangle (16-9,1.5) node[fill=white,pos=.5] {\footnotesize $|\mathcal{S}_w^{(t)}|$};
                  \draw[color=black,very thick,pattern=north west lines,,pattern color=teal!70!black] (16-9,0) rectangle (20-9,1.5) node[fill=white,pos=.5] {\footnotesize $S_w^{\text{\normalfont thresh}}-|\mathcal{S}_w^{(t)}|$};

                  \draw [
                      very thick,
                      decoration={
                              brace,
                              mirror,
                              raise=0.2cm
                          },
                      decorate
                  ] (-1,0) -- (20,0)
                  node [pos=0.5,anchor=north,yshift=-0.35cm] {\large$N$};
                  
                  \draw [
                      very thick,
                      decoration={
                              brace,
                              raise=0.2cm
                          },
                      decorate
                  ] (10-9,1.5) -- (20-9,1.5)
                  node [pos=0.5,anchor=north,yshift=1cm] {\large$X$};

              \end{tikzpicture}
              \caption{\small \it Conceptual partitioning of total server storage space ($N$ dimensions) into data content ($K_c$ dimensions), storage redundancy that is exploited by private-write ($X$ dimensions), and storage redundancy that is exploited by private-read ($N-K_c-X$ dimensions).}
              \label{fig:pertrade}
          \end{figure}
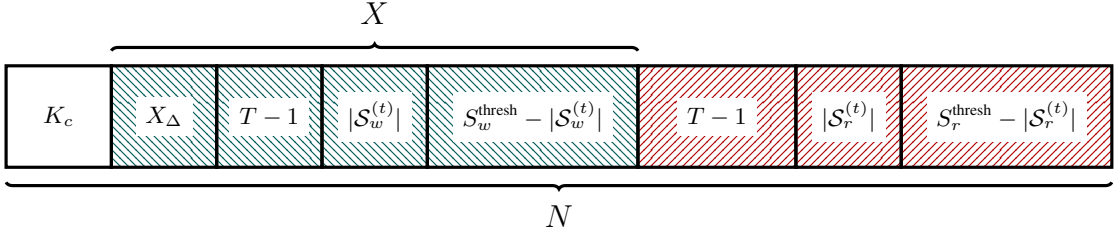
          
The total storage utilized by the ACSA-RW scheme across $N$ servers is represented as the overall $N$ dimensional space in Figure \ref{fig:pertrade}. Out of this, the actual data occupies only $K_c$ dimensions, which is why the storage efficiency of ACSA-RW is $K_c/N$. Because the storage must be $X$-secure, i.e., any set of up to $X$ colluding servers cannot learn anything about the data, it follows that out of the $N$ dimensions of storage space, $X$ dimensions are occupied by information that is independent of the actual data. The storage redundancy represented by these $X$ dimensions will be essential to enable the private-write functionality. But first let us consider the private-read operation for which we have a number of prior results on PIR as baselines for validating our intuition. From  Figure \ref{fig:pertrade} we note that outside the $X$ dimensions of redundancy that was introduced due to data-security, the $T$-privacy constraint adds a storage redundancy of another $T-1$ dimensions that is shown in red. To understand this, compare the asymptotic (large $K$) capacity of MDS-PIR \cite{Banawan_Ulukus}: $C_{\text{MDS-PIR}}=1-K_c/N$ with the (conjectured) asymptotic capacity of MDS-TPIR \cite{FREIJ_HOLLANTI, Sun_Jafar_MDSTPIR}: $C_{\text{MDS-TPIR}}=1-(K_c+T-1)/N$. To achieve a non-zero value of the asymptotic capacity, the former requires $N> K_c$, but the latter requires $N> K_c+T-1$. Equivalently, the former allows a read-dropout threshold of $N-K_c$ while the latter allows a read-dropout threshold of $N-(K_c+T-1)$. In fact, going further to the (conjectured) asymptotic capacity of MDS-XSTPIR \cite{Jia_Jafar_MDSXSTPIR}, which also includes the $X$-security constraint, we note that a non-zero value of capacity requires $N> K_c+X+T-1$. Intuitively, we may interpret this as: the $X$-security constraint increases the demands on  storage redundancy by $X$ dimensions and the $T$-privacy constraint increases the demands on storage redundancy by another $T-1$ dimensions. This is what is represented in Figure \ref{fig:pertrade}. Aside from the $K_c$ dimensions occupied by data,  the $X$ dimensions of redundancy added by the security constraint, and the $T-1$ dimensions of redundancy added by the $T$-privacy constraint, the remaining dimensions at the right end of the figure are used to accommodate read-dropouts. Indeed, this is what determines the read-dropout threshold, as we note from Figure \ref{fig:pertrade} that $N-(X+K_c+(T-1)) = S_r^{\text{\normalfont thresh}}$. Now consider the private-write operation which is novel and thus lacks comparative baselines. What is remarkable is the synergistic aspect of private-write, that it does not add further redundancy beyond the $X$ dimensions of storage redundancy already added by the $X$-security constraint. Instead, it operates within these $X$ dimensions to create further sub-partitions. Within these $X$ dimensions, a total of $X_\Delta+T-1$ dimensions are used to achieve $X_\Delta$-secure updates that also preserve $T$-privacy, and the remaining dimensions are used to accommodate write-dropout servers, giving us the write-dropout threshold as $X-(X_\Delta+T-1)=S_w^{\text{\normalfont thresh}}$. Remarkably in Figure \ref{fig:pertrade}, the $(K_c, X_\Delta, T-1, S_w^{\text{\normalfont thresh}})$ partition structure for private-write replicates at a finer level the original $(K_c, X, T-1,S_r^{\text{\normalfont thresh}})$ partition structure of private-read. Also remarkable is a new constraint introduced by the private-write operation that is not  encountered in prior works on PIR --- the feasibility of ACSA-RW requires $X\geq T$. Whether or not this constraint is fundamental in the asymptotic setting of large $K$ is an open problem for future work.

\subsubsection{Optimality} \label{sec:optimal}
Asymptotic (large $K$) optimality of ACSA-RW remains an open question in general. Any attempt to resolve this question runs into other prominent open problems in the information-theoretic PIR literature, such as the asymptotic capacity of MDS-TPIR \cite{FREIJ_HOLLANTI, Sun_Jafar_MDSTPIR} and MDS-XSTPIR \cite{Jia_Jafar_MDSXSTPIR} that also remain open. Nevertheless, it is worth noting that Theorem \ref{thm:acsarw} matches or improves upon the best known results in all cases where such results are available. In particular, the private-read phase of ACSA-RW scheme recovers a universally robust $X$-secure $T$-private information retrieval \cite{Bitar_StairPIR,Jia_Sun_Jafar_XSTPIR} scheme. When $K_c=1, X\geq X_{\Delta}+T$, it achieves the asymptotic capacity; and when $K_c>1, X\geq X_{\Delta}+T$, it achieves the conjectured asymptotic capacity \cite{Jia_Jafar_MDSXSTPIR}. While much less is known about optimal private-write schemes, it is clear that ACSA-RW significantly improves upon previous work as explained in Section \ref{sec:kim}. Notably, both upload and download costs are $O(1)$ in $K$, i.e., they do not scale with $K$. Thus, at the very least the costs are orderwise optimal. Another interesting point of reference is the best case scenario, where we have no dropout servers, $|\mathcal{S}_r^{(t)}|=0, |\mathcal{S}_w^{(t)}|=0$. The total communication cost of ACSA-RW in this case, i.e., the sum of upload and download costs, is $D_t+U_t = N\left(\frac{1}{S_r^{\text{thresh}}}+\frac{1}{S_w^{\text{thresh}}}\right)$, which is minimized when $K_c=1$, $S_r^{\text{thresh}}=S_w^{\text{thresh}}$,  and $X= (N+X_\Delta -1)/2$. For large $N$, we have $D_t+U_t\approx 2+2 = 4$, thus in the best case scenario, ACSA-RW is optimal within a factor of $2$. Finally, on a speculative note, perhaps the most striking aspect of Theorem \ref{thm:acsarw} is the symmetry between upload and download costs, which (if not coincidental) bodes well for their fundamental significance and information theoretic optimality.

\subsubsection{The Choice of Parameter $K_c$}\label{sec:kc}
The choice of the parameter $K_c$ in ACSA-RW determines the storage efficiency of the scheme, $\eta=K_c/N$. At one extreme, we have the smallest possible value of $K_c$, i.e., $K_c=1$, which is the least efficient storage setting, indeed the storage efficiency is analogous to replicated storage, each server uses as much storage space as the size of all data ($KL\log_2 q$ bits). This setting yields the best (smallest) download costs. The other extreme corresponds to the maximum possible value of $K_c$, which is obtained as $K_c=N-(X+T)$ because the dropout thresholds cannot be smaller than $1$. At this extreme, storage is the most efficient,  but there is no storage redundancy left to accommodate any read dropouts, and the download cost of ACSA-RW takes its maximal value, equal to $N$. Remarkably, the upload cost of the private-write operation does not depend on $K_c$. However, $K_c$ is  significant for another reason; it determines the access complexity (see Remark \ref{remark:ac}) of both private read and write operations, i.e., the number of bits that are read from or written to by each available server. In particular,  the access complexity of each available server in the private read or write phases is at most $(KL/K_c)\log_2 q$, so for example, increasing $K_c$ from $1$ to $2$ can reduce the access complexity in half, while simultaneously doubling the storage efficiency.

\subsubsection{Tradeoff between Upload and Download Costs}\label{sec:DtUt}
The trade-off between the upload cost and the download cost of the ACSA-RW scheme is illustrated via two examples in Figure \ref{fig:tradeoff}, where we have $N=10, X_{\Delta}=T=1$ for the blue solid curve, and $N=10, X_{\Delta}=1, T=2$ for the red solid curve. For both examples, we set $K_c=1$ and assume that there are no dropout servers. The trade-off is achieved with various choices of $X$. For the example shown in the blue solid curve, we set $X=(2,3,4,5,6,7,8)$. For the example in the red solid curve, we set $X=(3,4,5,6,7)$. Note that the most balanced trade-off point is achieved when $X=N/2$.
          \begin{figure}[h]
              \centering
              \begin{tikzpicture}
                  \begin{axis}[
                          grid style={line width=.1pt, draw=gray!20},
                          major grid style={line width=.2pt,draw=gray!50},
                          grid=both,
                          axis lines = left,
                          minor tick num=1,
                          xlabel = {Upload cost, $U_t\longrightarrow$},
                          ylabel = {Download cost, $D_t\longrightarrow$},
                          xmin=0,
                          xmax=11,
                          ymin=0,
                          ymax=11,
                          very thick,
                          use fpu=false
                      ]

                      \addplot [mark=*,
                          color=blue!60!black,
                          ultra thick,
                          samples at={2,3,4,5,6,7,8},
                      ] ({(10)/(x-1)},{10/(9-x)})  node[pos=0.01,above=-1.7em ,color=black]{\footnotesize ($X=2$)}  node[pos=0.87,left=1em ,color=black,rotate=-90]{\footnotesize ($X=8$)};

                      \addplot [mark=*,
                          color=red!60!black,
                          ultra thick,
                          samples at={3,4,5,6,7},
                      ] ({(10)/(x-2)},{10/(8-x)}) node[pos=0.55,above right,color=black,rotate=-30]{$\longleftarrow X$} node[pos=0.01,above=0.3em ,color=black]{\footnotesize ($X=3$)}  node[pos=1.05,right=1em ,color=black,rotate=-90]{\footnotesize ($X=7$)};

                  \end{axis}

              \end{tikzpicture}
              \caption{\small \it Upload, download costs pairs $(U_t,D_t)$ of the  ACSA-RW scheme in the asymptotic setting $L\gg K\gg 1$, for $N=10, X_\Delta=T=1$ (the blue curve) and $N=10, X_\Delta=1, T=2$ (the red curve), with various choices of $X$. Both examples  assume that $K_c=1$ and there are no dropout servers.}
              \label{fig:tradeoff}
          \end{figure}
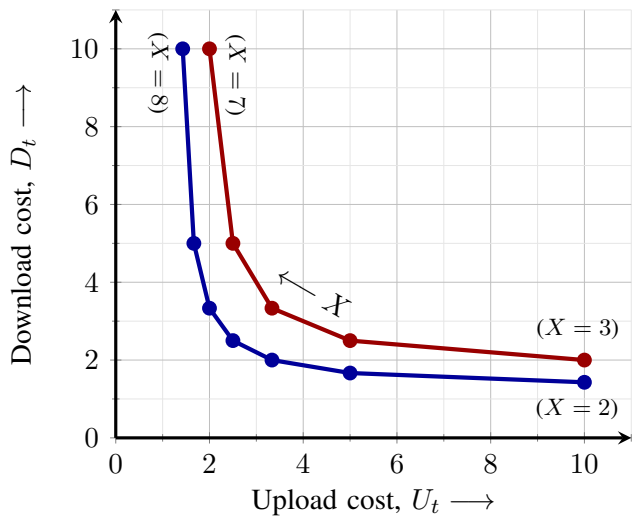

\subsubsection{Synergistic Gains from Joint Design of Private Read and Write}
A notable aspect of the ACSA-RW scheme, which answers in the affirmative an open question raised in Section 4.4.4 of \cite{FLsurvey1}, is the synergistic gain from the joint design of the read phase and the write phase. While the details of the scheme are non-trivial and can be found in Section \ref{sec:acsarwproof}, let us provide an intuitive explanation here by ignoring some of the details. As a simplification, let us ignore security constraints and consider the database represented by a vector $\mathbf{W}=[W_1, W_2, \cdots, W_K]^{\mathsf{T}}$ that consists of symbols from the $K$ submodels. Suppose the user is interested in $W_{\theta_t}$ for some index $\theta_t\in[K]$ that must be kept private. Note that $W_{\theta_t}=\mathbf{W}^{\mathsf{T}}\mathbf{e}_K(\theta_t)$, where $\mathbf{e}_K(\theta_t)$ is the standard basis vector, i.e., the desired symbol is obtained as an inner product of the database vector and the basis vector $\mathbf{e}_K(\theta_t)$. To do this privately, the basis vector is treated by the user as a secret and a linear threshold secret-sharing scheme is used to generate shares that are sent to the servers. The servers return the inner products of the stored data and the secret-shared basis vector, which effectively form the secret shares of the desired inner product. Once the user collects sufficiently many secret shares, (s)he is able to retrieve the desired inner product, and therefore the desired symbol $W_{\theta_t}$. This  is the key to the private read-operation. Now during the write phase, the user  wishes to update the database to the new state: $\mathbf{W}'=[W_1,\cdots,W_{\theta_t-1},W_{\theta_t}+\Delta_t,W_{\theta_t+1}\cdots,W_K]$, which can be expressed as $\mathbf{W}' = \mathbf{W}+\Delta_t\mathbf{e}_K(\theta_t)$. This can be accomplished by sending the secret-shares of $\Delta_t\mathbf{e}_K(\theta_t)$ to the $N$ servers. The key observation here is the following: since the servers (those that were available during the read phase) have already received secret shares of $\mathbf{e}_K(\theta_t)$, the cost of sending the secret-shares of $\Delta_t\mathbf{e}_K(\theta_t)$ is significantly reduced. Essentially, it suffices to send secret shares of $\Delta_t$ which can be multiplied with the secret shares of $\mathbf{e}_K(\theta_t)$ to generate secret shares of $\Delta_t\mathbf{e}_K(\theta_t)$ at the servers. This is much more efficient because $\Delta_t\mathbf{e}_K(\theta_t)$ is a $K\times 1$ vector, while the dimension of $\Delta_t$ is $1$ (scalar), and $K\gg 1$. This is the intuition behind the synergistic gains from the joint design of private read and write operations that are exploited by ACSA-RW. Note that the servers operate directly on secret shares, as in homomorphic encryption \cite{Duong_Mishra_Yasuda}, and that these operations (inner products) are special cases of secure distributed matrix multiplications. Since CSA codes have been shown to be natural solutions for secure distributed matrix multiplications \cite{Jia_Jafar_CDBC}, it is intuitively to be expected that CSA schemes should lead to communication-efficient solutions for the private read-write implementation as described above.

\subsubsection{How to Fully Update a Distributed Database that is only Partially Accessible}\label{sec:howtowrite}
A seemingly paradoxical aspect of the write phase of ACSA-RW is that it is able to force the distributed database across all $N$ servers to be fully consistent with the updated data, even though the database is only partially accessible due to write-dropout servers. Let us explain the intuition behind this  with a toy example\footnote{The toy example is not strictly a special case of the ACSA-RW scheme, because the number of servers $N=2$ is too small to guarantee any privacy. However, the example serves to demonstrate the key idea.} where we have $N=2$ servers and $X=1$ security level is required. For simplicity we have only one file/submodel, i.e., $K=1$, so there are no privacy concerns. The storage at the two servers is $S_1=W+Z$, $S_2=Z$, where $W$ is the data (submodel) symbol, and $Z$ is the random noise symbol used to guarantee the $X=1$ security level, i.e., the storage at each server individually reveals nothing about the data $W$. The storage can be expressed in the following form.
          \begin{align}
              \begin{bmatrix}
                  S_1 \\
                  S_2
              \end{bmatrix}=\underbrace{\begin{bmatrix}
                      1 & 1 \\
                      0 & 1
                  \end{bmatrix}}_{\mathbf{G}}\begin{bmatrix}
                  W \\
                  Z
              \end{bmatrix}.
          \end{align}
          so that $\mathbf{G}$ is the coding function, and  the data can be recovered as $S_1-S_2=W$.
          
          Now suppose the data $W$ needs to be updated to the new value $W'=W+\Delta$. The updated storage $S_1', S_2'$ should be such that the coding function is unchanged (still the same $\mathbf{G}$ matrix) and the updated data can similarly be recovered as $S_1'-S_2'=W'$. Remarkably this can be done even if one of the two servers drops out and is therefore inaccessible. For example, if Server $1$ drops out, then we can update the storage only at Server $2$ to end up with $S_1' = S_1 = W+Z$, and $S_2' = Z-\Delta$, such that indeed $S_1'-S_2'=W'$ and the coding function is unchanged.
                 \begin{align}
             \begin{bmatrix}
                  S_1' \\
                  S_2'
              \end{bmatrix}=  \begin{bmatrix}
                  S_1 \\
                  S_2'
              \end{bmatrix}={\begin{bmatrix}
                      1 & 1 \\
                      0 & 1
                  \end{bmatrix}}\begin{bmatrix}
                  W+\Delta \\
                  Z-\Delta
              \end{bmatrix}={\begin{bmatrix}
                      1 & 1 \\
                      0 & 1
                  \end{bmatrix}}\begin{bmatrix}
                  W' \\
                  Z'
              \end{bmatrix}.
          \end{align}
          
 Similarly, if Server $2$ drops out, then we can update the storage only at Server $1$ to end up with $S_1' =  W+\Delta+Z$, and $S_2' = S_2=Z$, such that we still  have $S_1'-S_2'=W'$ and the coding function is unchanged.
          \begin{align}
             \begin{bmatrix}
                  S_1' \\
                  S_2'
              \end{bmatrix}=  \begin{bmatrix}
                  S_1' \\
                  S_2
              \end{bmatrix}={\begin{bmatrix}
                      1 & 1 \\
                      0 & 1
                  \end{bmatrix}}\begin{bmatrix}
                  W+\Delta \\
                  Z
              \end{bmatrix}
              ={\begin{bmatrix}
                      1 & 1 \\
                      0 & 1
                  \end{bmatrix}}\begin{bmatrix}
                  W' \\
                  Z
              \end{bmatrix}.
          \end{align}
 This, intuitively, is how the paradox is resolved, and a distributed database is fully updated even when it is only partially accessible. Note that the realization of the ``noise'' symbol is different for various realizations of the dropout servers, $Z'\neq Z$. 
 
While the toy example conveys a key idea, the generalization of this idea to the ACSA-RW scheme is rather non-trivial.  Let us shed some light on this generalization, which is made possible by the construction of an ``\emph{ACSA null-shaper}'', see Definition \ref{def:ACSAns}. Specifically, by carefully placing nulls of the CSA code polynomial in the update equation, the storage of the write-dropout servers in $\mathcal{S}_w^{(t)}$  is left unmodified. It is important to point out that the storage structure (i.e., ACSA storage, see Definition \ref{def:ACSAstor}) is preserved, just as the coding function is left unchanged in the toy example above. Let us demonstrate the idea with a minimal example where $X=2, T=1, X_{\Delta}=0, N=4$. Let us define the following functions.
          \begin{align}
              {\bf S}(\alpha) & =\mathbf{W}+\alpha\mathbf{Z}_1+\alpha^2\mathbf{Z}_2, \\
              {\bf Q}(\alpha) & =\mathbf{e}_{K}(\theta_t)+\alpha\mathbf{Z}',
          \end{align}
          where $\mathbf{W}=[W_1, W_2, \cdots, W_K]$ is a $K\times 1$ vector of the $K$ data (submodel) symbols, $\mathbf{Z}_1,\mathbf{Z}_2,\mathbf{Z}'$ are uniformly and independently distributed noise vectors that are used to protect data security and the user's privacy, respectively. Let $\alpha_1,\alpha_2,\alpha_3,\alpha_4$ be $4$ distinct non-zero elements from a finite field $\mathbb{F}_q$. The storage at the $4$ servers is ${\bf S}(\alpha_1)$, ${\bf S}(\alpha_2)$, ${\bf S}(\alpha_3)$, ${\bf S}(\alpha_4)$, respectively. Similarly, the read-queries for the $4$ servers are ${\bf Q}(\alpha_1)$, ${\bf Q}(\alpha_2)$, ${\bf Q}(\alpha_3)$, ${\bf Q}(\alpha_4)$, respectively. We note that the storage vectors and the query vectors can be viewed as secret sharings of ${\bf W}$ and ${\bf e}_K(\theta_t)$ vectors with threshold of $X=2$ and $T=1$, respectively. Now let us assume that $\mathcal{S}_w^{(t)}=\{1\}$ for some $t\in\mathbb{N}$, i.e., Server 1 drops out. Let us define the function $\Omega(\alpha)=(\alpha_1-\alpha)/\alpha_1$, which is referred to as {\it ACSA null-shaper}, and consider the following update equation.
          \begin{align}
              {\bf S}'(\alpha)={\bf S}(\alpha)+\Omega(\alpha)\Delta_t {\bf Q}(\alpha).
          \end{align}
          Inspecting the second term on the RHS, we note that
          \begin{align}
              \Omega(\alpha)\Delta_t {\bf Q}(\alpha) & =\frac{1}{\alpha_1}(\alpha_1-\alpha)\Delta_t(\mathbf{e}_{K}(\theta_t)+\alpha\mathbf{Z}')                                                          \\
                                               & =\frac{1}{\alpha_1}\Delta_t\left(\alpha_1\mathbf{e}_{K}(\theta_t)+\alpha(\alpha_1\mathbf{Z}'-\mathbf{e}_{K}(\theta_t))-\alpha^2\mathbf{Z}'\right) \\
                                               & =\Delta_t\left(\mathbf{e}_{K}(\theta_t)+\alpha(\mathbf{Z}'-\alpha_1^{-1}\mathbf{e}_{K}(\theta_t))-\alpha^2\alpha_1^{-1}\mathbf{Z}'\right),        \\
                                               & =\Delta_t\left(\mathbf{e}_{K}(\theta_t)+\alpha\dot{\mathbf{I}}_1-\alpha^2\dot{\mathbf{I}}_2\right),
          \end{align}
          where $\dot{\mathbf{I}}_1=\mathbf{Z}'-\alpha_1^{-1}\mathbf{e}_{K}(\theta_t)$ and $\dot{\mathbf{I}_2}=\alpha_1^{-1}\mathbf{Z}'$. Evidently, by the update equation, the user is able to update the symbol of the $\theta_t^{th}$ message with the increment $\Delta_t$, while maintaining the storage as a secret sharing of threshold $2$, i.e.,
          \begin{align}
              {\bf S'}(\alpha)=\left(\mathbf{W}+\Delta_t\mathbf{e}_{k}(\theta_t)\right)+\alpha(\mathbf{Z}_1+\Delta_t\dot{\mathbf{I}}_1)+\alpha^2(\mathbf{Z}_2+\Delta_t\dot{\mathbf{I}}_2).
          \end{align}
          However, by the definition of ACSA null-shaper, we have $\Omega(\alpha_1)=0$. Thus ${\bf S'}(\alpha_1)={\bf S}(\alpha_1)$, and we do not have to update the storage at the Server 1. In other words, $\dot{\mathbf{I}}_1$ and $\dot{\mathbf{I}}_2$ are {\it artificially correlated interference symbols} such that the codeword $\Omega(\alpha_1){\bf Q}(\alpha_1)$ is zero, and accordingly, the storage of Server 1 is left unmodified. Note that ACSA null-shaper does not affect the storage structure because $X=2>T=1$. The idea illustrated in this minimal example indeed generalizes to the full ACSA-RW scheme, see Section \ref{sec:acsarwproof} for details.

\subsubsection{Comparison with \cite{Kim_Lee_FSL}}\label{sec:kim}
Let us compare our ACSA-RW solution with that in \cite{Kim_Lee_FSL}. The setting in \cite{Kim_Lee_FSL} corresponds to $X=0, T=1, X_{\Delta}=0$, and $|\mathcal{S}_w^{(t)}|=|\mathcal{S}_r^{(t)}|=0$ for all $t\in\mathbb{N}$. Note that our ACSA-RW scheme for $X=1, T=1, X_{\Delta}=0$ and $K_c=1$ applies to the setting of \cite{Kim_Lee_FSL}  ($X=1$ security automatically satisfies $X=0$ security). To make the comparison more transparent, let us briefly review the construction in \cite{Kim_Lee_FSL}, where at any time $t, t\in\mathbb{Z}^*$, each of the $N$ servers stores the $K$ submodels in the following coded form.
          \begin{align}
              \mathbf{W}_{k}^{(t-1)}+z_{k}^{(t)}\mathbf{\Delta}_{t}, \forall k\in[K].
          \end{align}
         For all $t\in\mathbb{N}$, $\left(z_{k}^{(t)}\right)_{k\in[K]}$ are distinct random scalars generated by User $t$ and we set $z_{\theta_t}^{(t)}=1$. For completeness we define $\mathbf{\Delta}_0=\mathbf{0}, z_{k}^{(0)}=0, \mathbf{W}_k^{(-1)}=\mathbf{W}_k^{(0)}, \forall k\in[K]$. In addition, the $N$ servers store the random scalars $\left(z_{k}^{(t)}\right)_{k\in[K]}$, as well as the increment $\mathbf{\Delta}_t$ according to a secret sharing scheme of threshold $1$. In the retrieval phase, User $t$ retrieves the coded desired submodel $\mathbf{W}_{\theta_t}^{(t-2)}+z_{\theta_t}^{(t-1)}\mathbf{\Delta}_{t-1}$ privately according to a capacity-achieving replicated storage based PIR scheme, e.g., \cite{PIRfirstjournal}. Besides, User $t$ also downloads the secret shared random scalars $\left(z_{k}^{(t-1)}\right)_{k\in[K]}$ and the increment $\mathbf{\Delta}_{t-1}$ to correctly recover the desired submodel $\mathbf{W}_{\theta_t}^{(t-1)}$. In the update phase, User $t$ uploads to each of the $N$ servers the following update vectors $\mathbf{P}_k^{(t)}$ for all $k\in[K]$.
          \begin{align}
              \mathbf{P}_k^{(t)}=\left\{
              \begin{array}{ll}
                  z_{k}^{(t)}\mathbf{\Delta}_t-z_{k}^{(t-1)}\mathbf{\Delta}_{t-1}, & k\neq \theta_{t-1}, \\
                  z_{k}^{(t)}\mathbf{\Delta}_t,                                    & k=\theta_{t-1}.
              \end{array}
              \right.\label{eq:kimupdate}
          \end{align}
          Also, User $t$ uploads the secret shared random scalars $\left(z_{k}^{(t)}\right)_{k\in[K]}$ and the increment $\mathbf{\Delta}_{t}$ to the $N$ servers. To perform an update, each of the $N$ servers updates all of the submodels $k\in[K]$ according to the following equation.
          \begin{align}
              \left(\mathbf{W}_{k}^{(t-2)}+z_{k}^{(t-1)}\mathbf{\Delta}_{t-1}\right)+\mathbf{P}_k^{(t)}=\mathbf{W}_{k}^{(t-1)}+z_{k}^{(t)}\mathbf{\Delta}_{t}.
          \end{align}
          Perhaps the most significant difference between our ACSA-RW scheme and the construction in \cite{Kim_Lee_FSL} is that the latter does not guarantee the privacy of successive updates, i.e., by monitoring the storage at multiple time slots, the servers are eventually able to learn about the submodel indices from past updates\footnote{This is because the update vectors \eqref{eq:kimupdate} for the $K$ submodels at any time $t$ can be viewed as $\mathcal{P}_t=\vspan\{\mathbf{\Delta}_t,\mathbf{\Delta}_{t-1}\}$. For any two consecutive time slots $t$ and $t+1$, it is possible to determine $\vspan\{\mathbf{\Delta}_t\}=\vspan\{\mathbf{\Delta}_t,\mathbf{\Delta}_{t-1}\}\cap\vspan\{\mathbf{\Delta}_{t+1},\mathbf{\Delta}_{t}\}=\mathcal{P}_t\cap\mathcal{P}_{t+1}$. Due to the fact that for User $t$, the update vector for the $\theta_{t-1}^{th}$ submodel only lies in $\vspan\{\mathbf{\Delta}_t\}$, any curious server is able to obtain information about $\theta_{t-1}$ from $\mathcal{P}_{t}$ if $\mathbf{\Delta}_t$ is linearly independent of $\mathbf{\Delta}_{t-1}$.}.  On the other hand, our construction guarantees  information theoretic privacy  for an unlimited number of updates,  without extra storage overhead. Furthermore, we note that the normalized download cost achieved by the construction in \cite{Kim_Lee_FSL} cannot be less than $2$, whereas ACSA-RW achieves download cost of less than $2$ with large enough $N$. For the asymptotic setting $L/K\rightarrow\infty$, the upload cost\footnote{In \cite{Kim_Lee_FSL}  the achieved upload cost is $NLK+L+K$. However, in terms of average upload compression, e.g., by entropy encoding, allows lower upload cost. For example, in the asymptotic setting $L/K\rightarrow\infty$, the upload cost of $(2N+1+1/(N-1))$ may be achievable.} achieved by \cite{Kim_Lee_FSL} is at least $2N+1$, while ACSA-RW achieves the upload cost of at most $N$. The lower bound of upload cost of XSTPFSL is characterized in \cite{Kim_Lee_FSL} as $NK$. However, our construction of ACSA-RW shows that it is possible to do better.\footnote{It is assumed in \cite{Kim_Lee_FSL} that for the desired submodel, the user uploads $L$ symbols for the update, while for other submodels, the user should also upload $(K-1)L$ symbols to guarantee the privacy. However, it turns out that the uploaded symbols for the update of the desired submodel and the symbols for the purpose of guaranteeing the privacy do not have to be independent.} In particular, for the asymptotic setting $K\rightarrow\infty, L/K\rightarrow\infty$, the upload cost of less than $N$ is achievable by the ACSA-RW scheme.

\subsubsection{Comparison with \cite{Kushilevitz_DORAM_Sub}}
Let us also briefly review the $4$-server information-theoretic DORAM construction in \cite{Kushilevitz_DORAM_Sub} to see how our ACSA-RW scheme improves upon it. Note that the setting considered in \cite{Kushilevitz_DORAM_Sub} is a special case of our problem where $X=1,T=1,X_{\Delta}=0, K_c=1$ and $|\mathcal{S}_w^{(t)}|=|\mathcal{S}_r^{(t)}|=0$ for all $t\in\mathbb{N}$. First, we note that in the asymptotic setting $L/K\rightarrow\infty$, the upload cost achieved by the ACSA-RW is the same as that in \cite{Kushilevitz_DORAM_Sub}. Therefore, for this comparison we focus on the read phase and the download cost. Specifically, the information-theoretic DORAM construction in \cite{Kushilevitz_DORAM_Sub} partitions the four servers into two groups, each of which consists of $2$ servers. For the retrieval phase, the first group emulates a $2$-server PIR, storing the $K$ submodels secured with additive random noise, i.e., $\mathbf{W}+\mathbf{Z}$. The second group emulates another  $2$-server PIR storing the random noise $\mathbf{Z}$. To retrieve the desired submodel privately, the user exploits a PIR scheme to retrieve the desired secured submodel, as well as the corresponding random noise. Therefore, with capacity-achieving PIR schemes, the download cost is $4$ (for large $K$). On the other hand, our ACSA-RW scheme avoids the partitioning of the servers and improves the download cost by jointly exploiting all  $4$ servers. Remarkably, with the idea of cross-subspace alignment, out of the $4$ downloaded symbols, the interference symbols align within $2$ dimensions, leaving $2$ dimensions interference-free for the desired symbols, and consequently, the asymptotically optimal download cost of $2$ is achievable. Lastly, the ACSA-RW scheme also generalizes efficiently to arbitrary numbers of servers.

  \subsubsection{On the Assumption $L\gg K\gg 1$ for FSL}\label{sec:numerical}
   Finally, let us briefly explore the practical relevance of the asymptotic limits $K\rightarrow\infty, L/K\rightarrow\infty$, with an example. Suppose we have $N=6$ distributed servers,  we require security and privacy levels of $X_{\Delta}=T=1, X=3$. Let us set $K_c=1$, and we operate over $\mathbb{F}_8$. Consider an e-commerce recommendation application similar to what is studied in \cite{SFSM}, where a global model with a total of $3,500,000$ symbols (from $\mathbb{F}_8$) is partitioned into $K=50$ submodels. Each of the submodels is comprised of $L=70,000$ symbols. Note that $L\gg K\gg 1$. Let us assume that there are no dropout servers for some $t\in\mathbb{N}$. Now according to Lemma \ref{lemma:write},  the normalized upload cost achieved by the ACSA-RW scheme is $U_t=\frac{6\times 35000+6\times2\times50}{70,000}\approx 3.00857$. On the other hand, the normalized download cost achieved is $D_t=6/2=3$. Evidently, the asymptotic limits $K\rightarrow\infty, L/K\rightarrow\infty$ are fairly accurate for this non-asymptotic setting. For this particular example, the upload cost is increased by only $0.29\%$ compared to the asymptotic limit. On the other hand, the download cost is increased by only $1.4\times 10^{-22}\%$ compared to the lower bound (evaluated for $K=50$) from  \cite{Jia_Sun_Jafar_XSTPIR}.

\section{Proof of Theorem \ref{thm:acsarw}}\label{sec:acsarwproof}
For all $t\in\mathbb{N}$, we require that the number of read and write dropout servers is less than the corresponding threshold values, $0\leq |\mathcal{S}_r^{(t)}|<S_r^{\text{\normalfont thresh}}$ and $0\leq |\mathcal{S}_w^{(t)}|<S_w^{\text{\normalfont thresh}}$. Since the read and write dropout thresholds cannot be less than $1$, we require that $X\geq X_{\Delta}+T$, and  $N\geq K_c+X+T$ for a positive integer $K_c$.  Let us define $J=\xi\cdot\lcm\left([S_r^{\text{\normalfont thresh}}]\cup[S_w^{\text{\normalfont thresh}}]\right)$ and we set $L=J K_c$, where $\xi$ is a positive\footnote{The purpose of $\xi$ is primarily to allow the scheme to scale to larger values of $L$, one could assume $\xi=1$ for simplicity.} integer. In other words, $i\mid J$ for all $i\in[S_r^{\text{\normalfont thresh}}]\cup[S_w^{\text{\normalfont thresh}}]$. For ease of reference, let us define $R_r^{(t)}\triangleq S_r^{\text{\normalfont thresh}}-|\mathcal{S}_r^{(t)}|$, $\#_r^{(t)}\triangleq J/R_r^{(t)}$, $R_w^{(t)}\triangleq S_w^{\text{\normalfont thresh}}-|\mathcal{S}_w^{(t)}|$ and $\#_w^{(t)}\triangleq J/R_w^{(t)}$ for all $t\in\mathbb{N}$. Note that it is guaranteed by the choice of $J$ that $\#_r^{(t)}, \#_w^{(t)}$ are positive integers for all $t\in\mathbb{N}$. Indeed in the ACSA-RW scheme, private read and write operations can be viewed as operations that consist of $K_c\#_r^{(t)}$ and $K_c\#_w^{(t)}$ sub-operations, and in each sub-operation, $R_r^{(t)}$ and $R_w^{(t)}$ symbols of the desired submodel are retrieved and updated, respectively. The choice of $J$ guarantees that the number of sub-operations is always an integer, regardless of $|\mathcal{S}_r^{(t)}|$ and $|\mathcal{S}_w^{(t)}|$. The parameter $\xi$ guarantees that $L$ is still a free parameter so that $L/K\rightarrow\infty$ is well-defined. In other words, $L$ can be any multiple of $K_c\cdot\lcm\left([S_r^{\text{\normalfont thresh}}]\cup[S_w^{\text{\normalfont thresh}}]\right)$. Let us define $\mu\triangleq \max(S_r^{\text{\normalfont thresh}},S_w^{\text{\normalfont thresh}})$. We will need a total of $N+\max(\mu,K_c)$ distinct elements from the finite field $\mathbb{F}_q, q\geq N+\max(\mu,K_c)$, denoted as $(\alpha_1,\alpha_2,\cdots,\alpha_N)$, $(\widetilde{f}_1,\widetilde{f}_2,\cdots,\widetilde{f}_{\max(\mu,K_c)})$. Let us define the set $\overline{\mathcal{S}}_w^{(t)}=[N]\setminus\mathcal{S}_w^{(t)}$, and the set $\overline{\mathcal{S}}_r^{(t)}=[N]\setminus\mathcal{S}_r^{(t)}$. For all $t\in\mathbb{Z}^*, j\in[J], k\in[K], i\in[K_c]$, let us define
\begin{align}
    W_{k}^{(t)}(j,i)=W_{k}^{(t)}(i+K_c(j-1)),
\end{align}
i.e., the $L=J K_c$ symbols of each of the $K$ messages are reshaped into a $J\times K_c$ matrix. Similarly, for all $t\in\mathbb{N}, j\in[J], i\in[K_c]$, we define
\begin{align}
    \Delta^{(t)}_{j,i}=\Delta_t(i+K_c(j-1)).
\end{align}

For all $t\in\mathbb{Z}^*, j\in[J], i\in[K_c]$, let us define the following vectors.
\begin{align}
    \dot{\mathbf{W}}_{j,i}^{(t)}=\left[W_{1}^{(t)}(j,i),W_{2}^{(t)}(j,i),\cdots,W_{K}^{(t)}(j,i)\right]^{\mathsf{T}}.
\end{align}
Further, let us set $\mathcal{Z}_s=\left\{\dot{\mathbf{Z}}_{j,x}^{(0)}\right\}_{j\in[J],x\in[X]}$, where $\dot{\mathbf{Z}}_{j,x}^{(0)}$ are i.i.d. uniform column vectors from $\mathbb{F}_q^K$, $\forall j\in[J],x\in[X]$. For all $t\in\mathbb{N}, j\in[J],x\in[X]$, let $\dot{\mathbf{Z}}_{j,x}^{(t)}$ be $K\times 1$ column vectors from the finite field $\mathbb{F}_q$. For all $t\in\mathbb{N},u\in[\mu],i\in[K_c],s\in[T]$, let $\widetilde{\mathbf{Z}}_{u,i,s}^{(t)}$ be i.i.d. uniform column vectors from $\mathbb{F}_q^K$, and let us set $\mathcal{Z}_U^{(t)}=\left\{\widetilde{\mathbf{Z}}_{u,i,s}^{(t)}\right\}_{u\in[\mu],i\in[K_c],s\in[T]}\cup\left\{\dddot{Z}_{\ell,i,x}^{(t)}\right\}_{\ell\in[\#_w^{(t)}],i\in[K_c],x\in[X_{\Delta}]}$, where for all $t\in\mathbb{N}, \ell\in[\#_w^{(t)}],i\in[K_c],x\in[X_{\Delta}]$, $\dddot{Z}_{\ell,i,x}^{(t)}$ are i.i.d. uniform scalars from the finite field $\mathbb{F}_q$.

The ACSA-RW scheme (Definition \ref{def:acsarw}) is built upon the elements introduced in Definitions \ref{def:polasn}--\ref{def:ACSAns}.

\begin{definition}\label{def:polasn}{\bf(Pole Assignment) }
    If $\mu\geq K_c$, let us define the $\mu\times K_c$ matrix $\mathbf{F}$ to be the first $K_c$ columns of the following $\mu\times\mu$ matrix.
    \begin{align}
        \begin{bmatrix}
            \widetilde{f}_{1}     & \widetilde{f}_{\mu}   & \dots               & \widetilde{f}_{3} & \widetilde{f}_{2}   \\
            \widetilde{f}_{2}     & \widetilde{f}_{1}     & \widetilde{f}_{\mu} &                   & \widetilde{f}_{3}   \\
            \vdots                & \widetilde{f}_{2}     & \widetilde{f}_{1}   & \ddots            & \vdots              \\
            \widetilde{f}_{\mu-1} &                       & \ddots              & \ddots            & \widetilde{f}_{\mu} \\
            \widetilde{f}_{\mu}   & \widetilde{f}_{\mu-1} & \dots               & \widetilde{f}_{2} & \widetilde{f}_{1}
        \end{bmatrix}.
    \end{align}
    On the other hand, if $\mu< K_c$, let us define the $\mu\times K_c$ matrix $\mathbf{F}$ to be the first $\mu$ rows of the following $K_c\times K_c$ matrix.
    \begin{align}
        \begin{bmatrix}
            \widetilde{f}_{1}   & \widetilde{f}_{2}   & \dots             & \widetilde{f}_{K_c-1} & \widetilde{f}_{K_c}   \\
            \widetilde{f}_{K_c} & \widetilde{f}_{1}   & \widetilde{f}_{2} &                       & \widetilde{f}_{K_c-1} \\
            \vdots              & \widetilde{f}_{K_c} & \widetilde{f}_{1} & \ddots                & \vdots                \\
            \widetilde{f}_{3}   &                     & \ddots            & \ddots                & \widetilde{f}_{2}     \\
            \widetilde{f}_{2}   & \widetilde{f}_{3}   & \dots             & \widetilde{f}_{K_c}   & \widetilde{f}_{1}
        \end{bmatrix}.
    \end{align}
    Let $\left(f_{j,i}\right)_{j\in[J],i\in[K_c]}$ be a total of $JK_c$ elements from the finite field $\mathbb{F}_q$, which are defined as follows.
    \begin{align}
        \begin{bmatrix}
            f_{1,1} & f_{1,2} & \cdots & f_{1,K_c} \\
            f_{2,1} & f_{2,2} & \cdots & f_{2,K_c} \\
            \vdots  & \vdots  & \vdots & \vdots    \\
            f_{J,1} & f_{J,2} & \cdots & f_{J,K_c} \\
        \end{bmatrix}=
        \begin{bmatrix}
            \mathbf{F} \\
            \mathbf{F} \\
            \vdots     \\
            \mathbf{F}
        \end{bmatrix}.
    \end{align}
\end{definition}
According to the definition, we have the following two propositions immediately.
\begin{proposition}\label{prop:distinj}
    For all $i\in[K_c],m,n\in[J]$ such that $m\leq n$, $|[m:n]|\leq \mu$, the constants $\left(f_{j,i}\right)_{j\in[m:n]}$ are distinct.
\end{proposition}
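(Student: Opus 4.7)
The plan is to show that within each column $i$ of the base matrix $\mathbf{F}$, the $\mu$ entries are distinct, and then exploit the fact that the full sequence $(f_{j,i})_{j\in[J]}$ is obtained by periodic repetition of that column with period $\mu$. Since $J$ is a multiple of $\mu$ (because $\mu \in [S_r^{\text{thresh}}] \cup [S_w^{\text{thresh}}]$ and $J$ is a multiple of the $\lcm$ of this set), the stacking is clean and no boundary pathologies arise.

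First I would verify that every column of $\mathbf{F}$ has $\mu$ pairwise distinct entries. In the case $\mu \geq K_c$, the matrix $\mathbf{F}$ consists of the first $K_c$ columns of a $\mu\times\mu$ circulant built from the distinct elements $\widetilde{f}_1,\ldots,\widetilde{f}_\mu$; each column of that circulant is a cyclic permutation of $(\widetilde{f}_1,\widetilde{f}_2,\ldots,\widetilde{f}_\mu)^{\mathsf{T}}$, and hence contains exactly those $\mu$ distinct field elements. In the complementary case $\mu < K_c$, the matrix $\mathbf{F}$ consists of the first $\mu$ rows of a $K_c\times K_c$ circulant built from the distinct elements $\widetilde{f}_1,\ldots,\widetilde{f}_{K_c}$; each column of the $K_c\times K_c$ circulant is a cyclic permutation of $(\widetilde{f}_1,\widetilde{f}_{K_c},\ldots,\widetilde{f}_2)^{\mathsf{T}}$, so restricting to the top $\mu$ rows selects $\mu$ elements that are again pairwise distinct (being a subset of a set of $K_c$ distinct field elements).

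Next, from the block-stacked definition of $(f_{j,i})_{j\in[J],i\in[K_c]}$, for a fixed column $i$ one has $f_{j,i} = \mathbf{F}_{\,((j-1)\bmod\mu)+1,\,i}$, so the sequence is periodic in $j$ with period $\mu$. Therefore any window of exactly $\mu$ consecutive indices produces a cyclic rotation of column $i$ of $\mathbf{F}$, whose entries are pairwise distinct by the previous paragraph. For the general claim, any interval $[m:n] \subseteq [J]$ with $|[m:n]| \leq \mu$ is contained in some window of length $\mu$, so the entries $(f_{j,i})_{j\in[m:n]}$ are a subset of a set of $\mu$ distinct elements and are themselves pairwise distinct.

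I do not anticipate any substantive obstacle: the statement is a direct structural consequence of the circulant construction of $\mathbf{F}$ together with the periodic vertical tiling. The only mild subtlety is bookkeeping the two regimes $\mu \geq K_c$ and $\mu < K_c$ when asserting that each column contains $\mu$ distinct elements; once that is in place, the periodicity argument handles arbitrary windows of length at most $\mu$ uniformly.
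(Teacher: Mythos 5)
Your proof is correct, and since the paper states this proposition as an immediate consequence of Definition~\ref{def:polasn} without supplying any argument, you have simply made explicit what the paper leaves implicit: each column of the circulant block $\mathbf{F}$ contains pairwise distinct entries (in both regimes $\mu\geq K_c$ and $\mu<K_c$), $\mu$ divides $J$ because $[S_r^{\text{\normalfont thresh}}]\cup[S_w^{\text{\normalfont thresh}}]=[\mu]$ so $\mu\mid\lcm([\mu])\mid J$, and hence any window of at most $\mu$ consecutive indices hits distinct residues modulo $\mu$ and therefore distinct entries of that column. This matches the intended reading of the definition and needs no changes.
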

\begin{proposition}\label{prop:distink}
    For all $j\in[J]$, the constants $\left(f_{j,i}\right)_{i\in[K_c]}$ are distinct.
\end{proposition}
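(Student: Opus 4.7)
The plan is to unwind the definition of the constants $f_{j,i}$ and reduce the claim to an inspection of one row of $\mathbf{F}$. By construction, the $J\times K_c$ array $(f_{j,i})$ is formed by vertically stacking copies of the $\mu\times K_c$ matrix $\mathbf{F}$, so the $j$-th row of this array coincides with the $(((j-1)\bmod\mu)+1)$-th row of $\mathbf{F}$. Hence, to establish Proposition \ref{prop:distink} it suffices to show that every row of $\mathbf{F}$ consists of $K_c$ pairwise distinct entries drawn from $\{\widetilde{f}_1,\widetilde{f}_2,\ldots,\widetilde{f}_{\max(\mu,K_c)}\}$, and then invoke the assumption that these $\max(\mu,K_c)$ field elements are themselves distinct.

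I would handle the two cases of the definition separately. In the case $\mu\geq K_c$, I would note that the underlying $\mu\times\mu$ matrix is circulant: its $r$-th row is the cyclic shift $(\widetilde{f}_r,\widetilde{f}_{r-1},\ldots,\widetilde{f}_1,\widetilde{f}_\mu,\widetilde{f}_{\mu-1},\ldots,\widetilde{f}_{r+1})$, which is a permutation of the $\mu$ distinct constants $\widetilde{f}_1,\ldots,\widetilde{f}_\mu$. Restricting to the first $K_c\leq\mu$ columns therefore yields $K_c$ distinct values. In the case $\mu<K_c$, the underlying $K_c\times K_c$ matrix is again circulant, with $r$-th row a cyclic shift of $(\widetilde{f}_1,\widetilde{f}_2,\ldots,\widetilde{f}_{K_c})$, so each of its rows is a permutation of these $K_c$ distinct constants; taking the first $\mu$ rows preserves this row-wise distinctness. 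In either case, each row of $\mathbf{F}$ has distinct entries.

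Combining the two cases, it follows that for every $j\in[J]$ the entries $(f_{j,i})_{i\in[K_c]}$, being identical to the corresponding row of $\mathbf{F}$, are pairwise distinct. The only subtlety worth tracking is the indexing convention of the circulant—specifically, verifying that shifting the row index by one really does produce a cyclic permutation in the directions indicated in the displayed matrices—but this is purely a bookkeeping check rather than a genuine obstacle. No new ideas or auxiliary lemmas are required beyond the distinctness of $(\widetilde{f}_1,\ldots,\widetilde{f}_{\max(\mu,K_c)})$, which is guaranteed by the choice $q\geq N+\max(\mu,K_c)$.
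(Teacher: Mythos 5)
Your proof is correct, and it is essentially the argument the paper leaves implicit when it states the proposition follows immediately from Definition~\ref{def:polasn}: each row of the array $(f_{j,i})$ coincides with a row of $\mathbf{F}$ (using $\mu \mid J$), and each row of $\mathbf{F}$ is obtained from a row of a circulant matrix whose entries along any row form a cyclic permutation of the distinct constants $\widetilde{f}_1,\ldots,\widetilde{f}_{\max(\mu,K_c)}$, hence are pairwise distinct after truncating to the first $K_c$ columns (when $\mu\geq K_c$) or first $\mu$ rows (when $\mu<K_c$). No gap; the bookkeeping you flag checks out against the displayed matrices.
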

\begin{definition}\label{def:noiasn}{\bf(Noise Assignment) }
    For all $t\in\mathbb{N}, i\in[K_c], s\in[T]$, let us define
    \begin{align}
         & \left(\ddot{\mathbf{Z}}_{1,i,s}^{(t)},\ddot{\mathbf{Z}}_{2,i,s}^{(t)},\cdots,\ddot{\mathbf{Z}}_{J,i,s}^{(t)}\right)\notag                                                                                                                                                                                                                                                                     \\
         & =\left(\widetilde{\mathbf{Z}}_{1,i,s}^{(t)},\widetilde{\mathbf{Z}}_{2,i,s}^{(t)},\cdots,\widetilde{\mathbf{Z}}_{\mu,i,s}^{(t)},\widetilde{\mathbf{Z}}_{1,i,s}^{(t)},\widetilde{\mathbf{Z}}_{2,i,s}^{(t)},\cdots,\widetilde{\mathbf{Z}}_{\mu,i,s}^{(t)},\cdots,\widetilde{\mathbf{Z}}_{1,i,s}^{(t)},\widetilde{\mathbf{Z}}_{2,i,s}^{(t)},\cdots,\widetilde{\mathbf{Z}}_{\mu,i,s}^{(t)}\right),
    \end{align}
    i.e., $\left(\widetilde{\mathbf{Z}}_{1,i,s}^{(t)},\widetilde{\mathbf{Z}}_{2,i,s}^{(t)},\cdots,\widetilde{\mathbf{Z}}_{\mu,i,s}^{(t)}\right)$ are assigned in cyclic order to $\left(\ddot{\mathbf{Z}}_{1,i,s}^{(t)},\ddot{\mathbf{Z}}_{2,i,s}^{(t)},\cdots,\ddot{\mathbf{Z}}_{J,i,s}^{(t)}\right)$.
\end{definition}
\begin{definition}\label{def:ACSAstor}{\bf(ACSA Storage) }
    For any $t\in\mathbb{Z}^*$, the storage at the $N$ servers is said to form an ACSA storage if for all $n\in[N]$, $\mathbf{S}_{n}^{(t)}$ has the following form.
    \begin{align}
        \mathbf{S}_n^{(t)}=
        \begin{bmatrix}
            \sum_{i\in[K_c]}\frac{1}{\alpha_n-f_{1,i}}\dot{\mathbf{W}}^{(t)}_{1,i}+\sum_{x\in[X]}\alpha_n^{x-1}\dot{\mathbf{Z}}_{1,x}^{(t)} \\
            \sum_{i\in[K_c]}\frac{1}{\alpha_n-f_{2,i}}\dot{\mathbf{W}}^{(t)}_{2,i}+\sum_{x\in[X]}\alpha_n^{x-1}\dot{\mathbf{Z}}_{2,x}^{(t)} \\
            \vdots                                                                                                                          \\
            \sum_{i\in[K_c]}\frac{1}{\alpha_n-f_{J,i}}\dot{\mathbf{W}}^{(t)}_{J,i}+\sum_{x\in[X]}\alpha_n^{x-1}\dot{\mathbf{Z}}_{J,x}^{(t)}
        \end{bmatrix}.
    \end{align}
    Note that $X$ i.i.d. uniform random noise terms are MDS coded to guarantee the $X$-security.
\end{definition}

\begin{definition}\label{def:ACSAquery}{\bf(ACSA Query) }For any $t\in\mathbb{N}$, the read-queries $\left(Q_{n}^{(t,\theta_t)}\right)_{n\in[N]}$ by User $t$ for the $N$ servers are said to form an ACSA query if for all $n\in[N]$, we have
    \begin{align}
        Q_{n}^{(t,\theta_t)}=\left(\mathbf{Q}^{(t,\theta_t)}_{n,1},\mathbf{Q}^{(t,\theta_t)}_{n,2},\cdots,\mathbf{Q}^{(t,\theta_t)}_{n,K_c}\right),
    \end{align}
    where for all $i\in[K_c]$
    \begin{align}
        \mathbf{Q}^{(t,\theta_t)}_{n,i}=\begin{bmatrix}
            \mathbf{e}_K(\theta_t)+(\alpha_n-f_{1,i})\sum_{s\in[T]}\alpha_n^{s-1}\ddot{\mathbf{Z}}_{1,i,s}^{(t)} \\
            \mathbf{e}_K(\theta_t)+(\alpha_n-f_{2,i})\sum_{s\in[T]}\alpha_n^{s-1}\ddot{\mathbf{Z}}_{2,i,s}^{(t)} \\
            \vdots                                                                                               \\
            \mathbf{e}_K(\theta_t)+(\alpha_n-f_{J,i})\sum_{s\in[T]}\alpha_n^{s-1}\ddot{\mathbf{Z}}_{J,i,s}^{(t)}
        \end{bmatrix}.
    \end{align}
    Note that for all $t\in\mathbb{N}, n\in[N]$, $H\left(Q_n^{(t,\theta_t)}\right)=\mu K_c K$ in $q$-ary units, because according to Definition \ref{def:polasn} and Definition \ref{def:noiasn}, for all $i\in[K_c]$, $\mathbf{Q}_{n,i}^{(t,\theta_t)}$ is uniquely determined by its first $\mu K$ entries (the rest are replicas). Also note that $T$ i.i.d. uniform random noise terms are MDS coded to guarantee the $T$-privacy.
\end{definition}
\begin{remark}
Indeed, the fact that $\mathbf{Q}_{n,i}^{(t,\theta_t)}$ is uniquely determined by its first $\mu K$ entries is jointly guaranteed by the cyclic structures of pole assignment and noise assignment, i.e., Definition \ref{def:polasn} and Definition \ref{def:noiasn}. These cyclic structures are important in terms of minimizing the entropy of ACSA queries, so that it does not scale with the number of symbols $L$.
\end{remark}

\begin{definition}\label{def:ACSAinc}{\bf(ACSA Increment) }For any $t\in\mathbb{N}$, the write-queries $\left(P_{n}^{(t,\theta_t)}\right)_{n\in[N]}$ by User $t$ for the $N$ servers are said to form an ACSA increment if for all $n\in[N]$, we have
    \begin{align}
        P_{n}^{(t,\theta_t)}=\left(\mathbf{P}_{n,1}^{(t,\theta_t)},\mathbf{P}_{n,2}^{(t,\theta_t)},\cdots,\mathbf{P}_{n,K_c}^{(t,\theta_t)}\right),
    \end{align}
    where for all $i\in[K_c]$,
    \begin{align}
        \mathbf{P}_{n,i}^{(t,\theta_t)}=\diag\left(\widetilde{\Delta}_{1,i}^{(t)}\mathbf{I}_{KR_w^{(t)}},\widetilde{\Delta}_{2,i}^{(t)}\mathbf{I}_{KR_w^{(t)}},\cdots,\widetilde{\Delta}_{\#_w^{(t)},i}^{(t)}\mathbf{I}_{KR_w^{(t)}}\right),
    \end{align}
    and for all $\ell\in[\#_w^{(t)}]$,
    \begin{align}
        \widetilde{\Delta}_{\ell,i}^{(t)}=\sum_{j\in[(\ell-1)R_w^{(t)}+1:\ell R_w^{(t)}]}\frac{1}{\alpha_n-f_{j,i}}\Delta_{j,i}^{(t)}+\sum_{x\in[X_{\Delta}]}\alpha_n^{x-1}\dddot{Z}_{\ell,i,x}^{(t)}.
    \end{align}
    Note that for all $t\in\mathbb{N}, n\in[N]$, $H\left(P_n^{(t,\theta_t)}\right)=\#_w^{(t)} K_c$ in $q$-ary units. This is because for all $i\in[K_c]$, $\mathbf{P}_{n,i}^{(t,\theta_t)}$ is uniquely determined by $\left(\widetilde{\Delta}_{\ell,i}^{(t)}\right)_{\ell\in[\#_w^{(t)}]}$. Also note that $X_{\Delta}$ MDS coded i.i.d. uniform noise terms are used to guarantee the  $X_{\Delta}$-security.
\end{definition}

\begin{definition}{\bf(ACSA Packer) }For all $t\in\mathbb{N}, n\in[N]$, the ACSA Packer is defined as follows.
    \begin{align}
        \Xi_n^{(t)}=\left(\mathbf{\Xi}_{n,\ell,i}^{(t)}\right)_{\ell\in[\#_r^{(t)}],i\in[K_c]},
    \end{align}
    where for all $\ell\in[\#_r^{(t)}]$, $i\in[K_c]$,
    \begin{align}
        \mathbf{\Xi}_{n,\ell,i}^{(t)} & =\diag\left(\frac{\prod_{i'\in[K_c]\setminus\{i\}}(\alpha_n-f_{1,i'})}{\prod_{i'\in[K_c]\setminus\{i\}}(f_{1,i}-f_{1,i'})}\mathbf{I}_K,\cdots,\frac{\prod_{i'\in[K_c]\setminus\{i\}}(\alpha_n-f_{J,i'})}{\prod_{i'\in[K_c]\setminus\{i\}}(f_{J,i}-f_{J,i'})}\mathbf{I}_K\right)\notag   \\
                                   & \quad\times \diag(\underbrace{0\mathbf{I}_K,\cdots,0\mathbf{I}_K}_{\text{$R_r^{(t)}(\ell-1)$ $0\mathbf{I}_K$'s}},\underbrace{1\mathbf{I}_K,\cdots,1\mathbf{I}_K}_{\text{$R_r^{(t)}$ $1\mathbf{I}_K$'s}},\underbrace{0\mathbf{I}_K,\cdots,0\mathbf{I}_K}_{\text{$(J-\ell R_r^{(t)})$ $0\mathbf{I}_K$'s}}).
    \end{align}
    Note that the ACSA packer is a {\bf constant} since $|\mathcal{S}_r^{(t)}|$ is globally known.
\end{definition}

\begin{definition}{\bf (ACSA Unpacker) }For all $t\in\mathbb{N}, n\in[N]$, the ACSA Unpacker is defined as follows.
    \begin{align}
        \Upsilon_n^{(t)}=\left(\mathbf{\Upsilon}_{n,1}^{(t)},\mathbf{\Upsilon}_{n,2}^{(t)},\cdots,\mathbf{\Upsilon}_{n,K_c}^{(t)}\right),
    \end{align}
    where for all $i\in[K_c]$,
    \begin{align}
        \mathbf{\Upsilon}_{n,i}^{(t)}=\diag\left(\left(\frac{\prod_{j\in\mathcal{F}_{1}^{(t)}}(\alpha_n-f_{j,i})}{\prod_{j\in\mathcal{F}_{1}^{(t)}}(f_{1,i}-f_{j,i})}\right)\mathbf{I}_K,\cdots,\left(\frac{\prod_{j\in\mathcal{F}_{J}^{(t)}}(\alpha_n-f_{j,i})}{\prod_{j\in\mathcal{F}_{J}^{(t)}}(f_{J,i}-f_{j,i})}\right)\mathbf{I}_K\right),
    \end{align}
    where for all $j\in[J]$, we define $\mathcal{F}_{j}=\left[\left(\ceil*{j/R_w^{(t)}}-1\right)R_w^{(t)}+1:\ceil*{j/R_w^{(t)}}R_w^{(t)}\right]\setminus\{j\}$. Note that the ACSA unpacker is a {\bf constant} since $|\mathcal{S}_w^{(t)}|$ is globally known.
\end{definition}

\begin{remark}
As noted at the beginning of the proof, private read and write operations consists of $K_c\#_r^{(t)}$ and $K_c\#_w^{(t)}$ sub-operations, and for each sub-operation, $R_r^{(t)}$ and $R_w^{(t)}$ symbols of the desired submodel are retrieved and updated respectively. This is respectively made possible by the constructions of ACSA Packer and Unpacker. For private read, by ACSA Packer, in each sub-operation $R_r^{(t)}$ symbols of the desired submodel are ``packed'' into a Cauchy-Vandermonde structured answer string for best download efficiency, where Cauchy terms carry desired symbols and interference symbols are aligned along Vandermonde terms. Recoverability of the desired symbols is guaranteed by the invertibility of Cauchy-Vandermonde matrices (see, e.g., \cite{Gasca_Martinez_Muhlbach}). On the other hand, ACSA Increment (Definition \ref{def:ACSAinc}) can be viewed as a bunch of Cauchy-Vandermonde structured codewords, and in each codeword, a total of $R_w^{(t)}$ symbols are ``packed'' together for best upload efficiency. To correctly perform private write, each of the codewords is ``unpacked'' by the ACSA Unpacker to produce $R_w^{(t)}$ codewords that individually carry different increment symbols to preserve the ACSA Storage structure (Definition \ref{def:ACSAstor}).
\end{remark}

\begin{definition}{\bf (ACSA Null-shaper)}\label{def:ACSAns}
    For all $t\in\mathbb{N}, n\in[N]$, the ACSA null-shaper is defined as follows.
    \begin{align}
        \Omega^{(t)}=\left(\mathbf{\Omega}_{n,1}^{(t)},\mathbf{\Omega}_{n,2}^{(t)},\cdots,\mathbf{\Omega}_{n,K_c}^{(t)}\right),
    \end{align}
    where for all $i\in[K_c]$,
    \begin{align}
        \mathbf{\Omega}_{n,i}^{(t)}=\diag\left(\left(\frac{\prod_{m\in\mathcal{S}_w^{(t)}}(\alpha_n-\alpha_m)}{\prod_{m\in\mathcal{S}_w^{(t)}}(f_{1,i}-\alpha_m)}\right)\mathbf{I}_K,\cdots,\left(\frac{\prod_{m\in\mathcal{S}_w^{(t)}}(\alpha_n-\alpha_m)}{\prod_{m\in\mathcal{S}_w^{(t)}}(f_{J,i}-\alpha_m)}\right)\mathbf{I}_K\right).
    \end{align}
    Note that for all $n\in\mathcal{S}_w^{(t)}$, we have $\mathbf{\Omega}_{n}=\mathbf{0}$. Besides, the ACSA null-shaper is a {\bf constant} since $\mathcal{S}_w^{(t)}$ is globally known.
\end{definition}

\begin{definition}\label{def:acsarw}{\bf(ACSA-RW Scheme) }The initial storage at the $N$ servers is the ACSA storage at time $0$, i.e., $\left(\mathbf{S}_{n}^{(0)}\right)_{n\in[N]}$. At time $t, t\in\mathbb{N}$, in the read phase, the user uploads the ACSA query for the $N$ servers and retrieves the desired submodel $\mathbf{W}^{(t-1)}_{\theta_t}$ from the answers returned by the servers $n, n\in\overline{\mathcal{S}}_r^{(t)}$, which are constructed as follows.
    \begin{align}
        A_n^{(t,\theta_t)}=\left(\left(\mathbf{S}_n^{(t-1)}\right)^{\mathsf{T}}\mathbf{\Xi}_{n,\ell,i}^{(t)}\mathbf{Q}_{n,i}^{(t,\theta_t)}\right)_{\ell\in[\#_r^{(t)}],i\in[K_c]}, n\in\overline{\mathcal{S}}_r^{(t)}.\label{eq:answer}
    \end{align}
    In the update phase, the user uploads the ACSA increment for the servers $n, n\in\overline{\mathcal{S}}_w^{(t)}$, and each of the servers updates its storage according to the following equation.
    \begin{align}
        \mathbf{S}_n^{(t)} & =\mathbf{S}_n^{(t-1)}+\sum_{i\in[K_c]}\mathbf{\Omega}_{n,i}^{(t)}\mathbf{\Upsilon}_{n,i}^{(t)}\mathbf{P}_{n,i}^{(t,\theta_t)}\mathbf{Q}_{n,i}^{(t,\theta_t)}, n\in\overline{\mathcal{S}}_w^{(t)}.\label{eq:update}
    \end{align}
\end{definition}

Now let us prove the correctness, privacy and security of the ACSA-RW scheme. To proceed, we need the following lemmas.
\begin{lemma}\label{lemma:read}
    At any time $t,t\in\mathbb{N}$, User $t$ retrieves the desired submodel $\mathbf{W}_{\theta_t}^{(t-1)}$ from the answers returned by the servers $n,n\in\overline{\mathcal{S}}_r^{(t)}$ according to the ACSA-RW scheme while guaranteeing $T$-privacy.
\end{lemma}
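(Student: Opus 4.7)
The plan is to establish two things: \emph{correctness}, i.e.\ that the user can recover $\mathbf{W}_{\theta_t}^{(t-1)}$ from $(A_n^{(t,\theta_t)})_{n\in\overline{\mathcal{S}}_r^{(t)}}$, and $T$-\emph{privacy}, i.e.\ that the read queries at any $T$ colluding servers are statistically independent of $\theta_t$. Both rest on the Cauchy--Vandermonde structure built into the ACSA storage (Definition \ref{def:ACSAstor}) and ACSA query (Definition \ref{def:ACSAquery}), together with the Lagrange-type pre-factors carried by the ACSA packer.

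For correctness, I will fix $(\ell,i)\in[\#_r^{(t)}]\times[K_c]$ and expand \eqref{eq:answer} as $\sum_{j\in\mathrm{block}}P_j^{(n,i)}\,\langle\mathbf{S}_n^{(t-1)}[j],\mathbf{Q}_{n,i}^{(t,\theta_t)}[j]\rangle$, where $\mathrm{block}=[(\ell-1)R_r^{(t)}+1:\ell R_r^{(t)}]$ and the scalar $P_j^{(n,i)}=\prod_{i'\neq i}(\alpha_n-f_{j,i'})/\prod_{i'\neq i}(f_{j,i}-f_{j,i'})$ is read off the packer. The inner product decomposes into four cross terms: data-against-$\mathbf{e}_K(\theta_t)$, data-against-query-noise, storage-noise-against-$\mathbf{e}_K(\theta_t)$, and noise-against-noise. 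After multiplication by $P_j^{(n,i)}$, exactly one Cauchy pole is expected to survive, namely $W_{\theta_t}^{(t-1)}(j,i)/(\alpha_n-f_{j,i})$ with a nonzero constant coefficient; every other contribution, including the off-diagonal data terms for $i''\neq i$, collapses to a polynomial in $\alpha_n$ of total degree at most $K_c+X+T-2$. Summing over the $R_r^{(t)}$ values of $j$ in the block -- whose poles $\{f_{j,i}\}_j$ are distinct by Proposition \ref{prop:distinj} since $R_r^{(t)}\leq S_r^{\text{\normalfont thresh}}\leq\mu$ -- each available server returns a Cauchy--Vandermonde scalar with $R_r^{(t)}$ Cauchy terms and a polynomial piece carrying $K_c+X+T-1$ unknown coefficients.

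Stacking the $N-|\mathcal{S}_r^{(t)}|$ answers produces a square linear system of size $R_r^{(t)}+K_c+X+T-1 = N-|\mathcal{S}_r^{(t)}|$, whose coefficient matrix is Cauchy--Vandermonde with evaluation points $\{\alpha_n\}_{n\in\overline{\mathcal{S}}_r^{(t)}}$ and poles $\{f_{j,i}\}_{j\in\mathrm{block}}$, all drawn from the two disjoint sets of distinct field elements fixed in the construction. Invoking the classical invertibility of such matrices \cite{Gasca_Martinez_Muhlbach} recovers the $R_r^{(t)}$ desired symbols; sweeping over all $(\ell,i)\in[\#_r^{(t)}]\times[K_c]$ recovers $K_c\cdot\#_r^{(t)}\cdot R_r^{(t)}=L$ symbols, exhausting $\mathbf{W}_{\theta_t}^{(t-1)}$.

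For $T$-privacy, I will fix any $\mathcal{T}\subset[N]$ with $|\mathcal{T}|=T$ and, for each $(j,i)$, write $(\mathbf{Q}_{n,i}^{(t,\theta_t)}[j])_{n\in\mathcal{T}}$ as $\mathbf{e}_K(\theta_t)$ perturbed by $(\alpha_n-f_{j,i})\sum_{s\in[T]}\alpha_n^{s-1}\ddot{\mathbf{Z}}_{j,i,s}^{(t)}$. The $T\times T$ coefficient matrix mapping $(\ddot{\mathbf{Z}}_{j,i,s}^{(t)})_{s\in[T]}$ to the observed perturbation is Vandermonde in $\alpha_n$ after scaling row $n$ by the nonzero factor $\alpha_n-f_{j,i}$ (nonzero because $\{\alpha_n\}$ and $\{\widetilde{f}_u\}$ are chosen disjoint), hence invertible. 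Since the $\ddot{\mathbf{Z}}_{j,i,s}^{(t)}$ are i.i.d.\ uniform over $\mathbb{F}_q^K$, the conditional law of $(\mathbf{Q}_{n,i}^{(t,\theta_t)}[j])_{n\in\mathcal{T}}$ given $\theta_t$ is uniform and independent of $\theta_t$; independence across $(j,i)$ and across time slots (fresh $\mathcal{Z}_U^{(t)}$ at every $t$) closes out the claim. The main obstacle I anticipate is the pole-and-degree bookkeeping in the packer-weighted expansion: one must verify that $P_j^{(n,i)}$ simultaneously preserves the diagonal Cauchy pole for $W_{\theta_t}^{(t-1)}(j,i)$, cancels all off-diagonal poles at $\alpha_n=f_{j,i'}$ with $i'\neq i$, and caps the total polynomial interference degree at $K_c+X+T-2$ so that the dimension count closes exactly at $N-|\mathcal{S}_r^{(t)}|$.
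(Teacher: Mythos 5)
Your plan is correct and follows essentially the same route as the paper's proof: apply the packer $\mathbf{\Xi}_{n,\ell,i}^{(t)}$ block-by-block, expand the inner product so that $R_r^{(t)}$ Cauchy poles carry the desired symbols while all interference collapses into $K_c+X+T-1$ Vandermonde dimensions, recover via invertibility of the resulting square Cauchy--Vandermonde system over the available servers, and argue $T$-privacy from invertibility of the $T\times T$ Vandermonde matrix that multiplies the fresh query noise. Two small bookkeeping remarks: your interference count $K_c+X+T-1$ (degree bound $X+T+K_c-2$) is the right one---the paper's displayed matrix $\mathbf{C}$ in \eqref{eq:cvmat} lists Vandermonde exponents only up to $X+T-1$, which agrees with the correct count only when $K_c=1$ and is a typo for general $K_c$; and the noise vectors $\ddot{\mathbf{Z}}_{j,i,s}^{(t)}$ repeat cyclically in $j$ with period $\mu$ by Definition \ref{def:noiasn}, so the independence you invoke in the privacy step should be indexed by $(u,i,s)$ with $u\in[\mu]$ rather than by all $(j,i,s)$---harmless, since the repeated query rows are exact replicas and the underlying $\widetilde{\mathbf{Z}}_{u,i,s}^{(t)}$ are i.i.d., but worth stating cleanly.
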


\begin{proof}
    Let us consider the answers returned by the servers $n,n\in\overline{\mathcal{S}}_r^{(t)}$ in \eqref{eq:answer}. Note that for all $\ell\in[\#_r^{(t)}]$, $i\in[K_c]$, we have
    \begin{align}
         & \left(\mathbf{S}_n^{(t-1)}\right)^{\mathsf{T}}\mathbf{\Xi}_{n,\ell,i}^{(t)}\mathbf{Q}_{n,i}^{(t,\theta_t)}\notag                                                                                                                                                                                                                         \\
         & =\sum_{j\in[(\ell-1)R_r^{(t)}+1:\ell R_r^{(t)}]}\left(\sum_{i'\in[K_c]}\frac{1}{\alpha_n-f_{j,i'}}\dot{\mathbf{W}}^{(t-1)}_{j,i'}+\sum_{x\in[X]}\alpha_n^{x-1}\dot{\mathbf{Z}}_{j,x}^{(t-1)}\right)^{\mathsf{T}}\notag  \\
         & \hspace{4cm}\left(\frac{\prod_{i'\in[K_c]\setminus\{i\}}(\alpha_n-f_{j,i'})}{\prod_{i'\in[K_c]\setminus\{i\}}(f_{j,i}-f_{j,i'})}\right)\left(\mathbf{e}_K(\theta_t)+(\alpha_n-f_{j,i})\sum_{s\in[T]}\alpha_n^{s-1}\ddot{\mathbf{Z}}_{j,i,s}^{(t)}\right)                                                                                                                                                                                                         \\
         & =\sum_{j\in[(\ell-1)R_r^{(t)}+1:\ell R_r^{(t)}]}\left(\sum_{i'\in[K_c]}\frac{1}{\alpha_n-f_{j,i'}}\dot{\mathbf{W}}^{(t-1)}_{j,i'}+\sum_{x\in[X]}\alpha_n^{x-1}\dot{\mathbf{Z}}_{j,x}^{(t-1)}\right)^{\mathsf{T}}                                                                                                                                   \\
         & \hspace{3cm}\left(\left(\frac{\prod_{i'\in[K_c]\setminus\{i\}}(\alpha_n-f_{j,i'})}{\prod_{i'\in[K_c]\setminus\{i\}}(f_{j,i}-f_{j,i'})}\right)\mathbf{e}_K(\theta_t)+\frac{\prod_{i'\in[K_c]}(\alpha_n-f_{j,i})}{\prod_{i'\in[K_c]\setminus\{i\}}(f_{j,i}-f_{j,i'})}\sum_{s\in[T]}\alpha_n^{s-1}\ddot{\mathbf{Z}}_{j,i,s}^{(t)}\right) \\
         & =\sum_{j\in[(\ell-1)R_r^{(t)}+1:\ell R_r^{(t)}]}\left(\frac{1}{\alpha_n-f_{j,i}}\dot{\mathbf{W}}^{(t-1)}_{j,i}\mathbf{e}_K(\theta_t)+\sum_{m\in[X+T+K_c-1]}\alpha_n^{m-1}\dot{I}_{j,i,m}^{(t)}\right)                                                                                                                           \\
         & =\sum_{j\in[(\ell-1)R_r^{(t)}+1:\ell R_r^{(t)}]}\frac{1}{\alpha_n-f_{j,i}}W^{(t-1)}_{\theta_t}(j,i)+\sum_{m\in[X+T+K_c-1]}\alpha_n^{m-1}\ddot{I}_{r,i,m}^{(t)}.\label{eq:read1}
    \end{align}
    For all $j\in[J], m\in[X+T+K_c-1]$, $\dot{I}_{j,i,m}^{(t)}$ are various linear combinations of inner products of $\left(\dot{\mathbf{W}}^{(t-1)}_{j,i}\right)_{i\in[K_c]}, \mathbf{e}_K(\theta_t), \left(\dot{\mathbf{Z}}_{j,x}^{(t-1)}\right)_{x\in[X]}$ and $\left(\ddot{\mathbf{Z}}_{j,i,s}^{(t)}\right)_{i\in[K_c],s\in[T]}$, whose exact forms are irrelevant. Besides, $\ddot{I}_{\ell,i,m}^{(t)}=\sum_{j\in[(\ell-1)R_r^{(t)}+1:\ell R_r^{(t)}]}\dot{I}_{j,i,m}^{(t)}, \forall \ell\in[\#_r^{(t)}], i\in[K_c]$. Note that for all $j\in[J]$, $i\in[K_c]$, $\prod_{i'\in[K_c]\setminus\{i\}}(f_{j,i}-f_{j,i'})$ is the remainder of the polynomial division (with respect to $\alpha_n$)\\ $\left(\prod_{i'\in[K_c]\setminus\{i\}}(\alpha_n-f_{j,i'})\right)/\left(\alpha_n-f_{j,i}\right)$, which is for normalization. The existence of multiplicative inverse of $\prod_{i'\in[K_c]\setminus\{i\}}(f_{j,i}-f_{j,i'})$ is guaranteed by Proposition \ref{prop:distink}, i.e., $\prod_{i'\in[K_c]\setminus\{i\}}(f_{j,i}-f_{j,i'})\neq 0$. Therefore, due to the fact that the desired symbols are carried by the Cauchy terms (i.e., the first term in \eqref{eq:read1}) and the interference symbols (i.e., the undesired symbols, second term in \eqref{eq:read1}) are aligned along the Vandermonde terms, the desired symbols $\left(W_{\theta_t}^{(t-1)}(j,i)\right)_{j\in[(\ell-1)R_r^{(t)}+1:\ell R_r^{(t)}]}$ of the submodel $\mathbf{W}_{\theta_t}^{(t-1)}$ are resolvable by inverting the following Cauchy-Vandermonde matrix.
    \begin{align}
        \mathbf{C}=\begin{bmatrix}
            \frac{1}{f_{(\ell-1)R_r^{(t)}+1,i}-\alpha_{\overline{\mathcal{S}}_r^{(t)}(1)}}                                & \cdots & \frac{1}{f_{\ell R_r^{(t)},i}-\alpha_{\overline{\mathcal{S}}_r^{(t)}(1)}}                                & 1      & \cdots & \alpha_{\overline{\mathcal{S}}_r^{(t)}(1)}^{X+T-1}                                \\
            \frac{1}{f_{(\ell-1)R_r^{(t)}+1,i}-\alpha_{\overline{\mathcal{S}}_r^{(t)}(2)}}                                & \cdots & \frac{1}{f_{\ell R_r^{(t)},i}-\alpha_{\overline{\mathcal{S}}_r^{(t)}(2)}}                                & 1      & \cdots & \alpha_{\overline{\mathcal{S}}_r^{(t)}(2)}^{X+T-1}                                \\
            \vdots                                                                                                  & \vdots & \vdots                                                                                            & \vdots & \vdots & \vdots                                                                            \\
            \frac{1}{f_{(\ell-1)R_r^{(t)}+1,i}-\alpha_{\overline{\mathcal{S}}_r^{(t)}(|\overline{\mathcal{S}}_r^{(t)}|)}} & \cdots & \frac{1}{f_{\ell R_r^{(t)},i}-\alpha_{\overline{\mathcal{S}}_r^{(t)}(|\overline{\mathcal{S}}_r^{(t)}|)}} & 1      & \cdots & \alpha_{\overline{\mathcal{S}}_r^{(t)}(|\overline{\mathcal{S}}_r^{(t)}|)}^{X+T-1} \\
        \end{bmatrix}.\label{eq:cvmat}
    \end{align}
    It is remarkable that the non-singularity of the matrix $\mathbf{C}$ follows from the determinant of Cauchy-Vandermonde matrix (see, e.g., \cite{Gasca_Martinez_Muhlbach}) and the fact that according to Proposition \ref{prop:distinj}, the constants \\$\left((f_{j,i})_{j\in[(\ell-1)R_r^{(t)}+1:\ell R_r^{(t)}]}, (\alpha_n)_{n\in\overline{\mathcal{S}}_r^{(t)}}\right)$ are distinct for all $\ell\in[\#_r^{(t)}]$, regardless of the realizations of $\overline{\mathcal{S}}_r^{(t)}$ and $\overline{\mathcal{S}}_w^{(t)}$. Recall that $\overline{\mathcal{S}}_r^{(t)}(1)$, $\overline{\mathcal{S}}_r^{(t)}(2)$, etc., refer to distinct elements of $\overline{\mathcal{S}}_r^{(t)}$ (arranged in ascending order).  Therefore, the user is able to reconstruct the desired submodel due to the fact that $\mathbf{W}_{\theta_t}^{(t-1)}=\left(\left(W_{\theta_t}^{(t-1)}(j,i)\right)_{j\in[(\ell-1)R_r^{(t)}+1:\ell R_r^{(t)}]}\right)_{\ell\in[\#_r^{(t)}],i\in[K_c]}$. To see why the $T$-privacy holds, we note that by the construction of the query $\left(Q_{n}^{(t,\theta_t)}\right)_{n\in[N]}$, the vector $\mathbf{e}_K(\theta_t)$, which carries the information of desired index $\theta_t$, is protected by the MDS$(N,T)$ coded uniform i.i.d. random noise vectors. Thus the queries for the $N$ servers form a secret sharing of threshold $T$, and are independent of the queries and the increments of all prior users $\tau, \tau\in[t-1]$. Thus $T$-privacy is guaranteed. To calculate the download cost, we note that a total of $L=J K_c$ symbols of the desired submodel are retrieved out of the $K_c\#_r^{(t)}|\overline{\mathcal{S}}_r^{(t)}|$ downloaded symbols. Therefore we have $D_t=\left(K_c\#_r^{(t)}|\overline{\mathcal{S}}_r^{(t)}|\right)/(JK_c)=\left(\#_r^{(t)}\left(N-|\mathcal{S}_r^{(t)}|\right)\right)/J=\left(N-|\mathcal{S}_r^{(t)}|\right)/R_r^{(t)}=\left(N-|\mathcal{S}_r^{(t)}|\right)/\left(S_r^{\text{\normalfont thresh}}-|\mathcal{S}_r^{(t)}|\right)$. This completes the proof of Lemma \ref{lemma:read}.
\end{proof}
\begin{lemma}\label{lemma:write}
    At any time $t,t\in\mathbb{N}$, User $t$ correctly updates the desired submodel $\mathbf{W}_{\theta_t}^{(t-1)}$ and achieves  ACSA storage by uploading the ACSA increments to the servers $n,n\in\overline{\mathcal{S}}_w^{(t)}$ and exploiting the update equation \eqref{eq:update} according to the ACSA-RW scheme while guaranteeing $T$-privacy and $X_{\Delta}$-security.
\end{lemma}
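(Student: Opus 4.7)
The plan is to recast the update in \eqref{eq:update} as a polynomial-identity argument. For each position $j\in[J]$ lying in block $\ell=\lceil j/R_w^{(t)}\rceil$, view the $j$-th component of $\sum_{i\in[K_c]}\mathbf{\Omega}_{n,i}^{(t)}\mathbf{\Upsilon}_{n,i}^{(t)}\mathbf{P}_{n,i}^{(t,\theta_t)}\mathbf{Q}_{n,i}^{(t,\theta_t)}$ as the evaluation at $\alpha=\alpha_n$ of a rational function $\mathcal{U}_j(\alpha)$ of a formal indeterminate $\alpha$, obtained by formally replacing every $\alpha_n$ throughout. The update then reads $\mathbf{S}_n^{(t)}[j]=\mathbf{S}_n^{(t-1)}[j]+\mathcal{U}_j(\alpha_n)$ for $n\in\overline{\mathcal{S}}_w^{(t)}$, and the entire lemma reduces to two claims: (a) $\mathcal{U}_j(\alpha)=\sum_{i\in[K_c]}\Delta_{j,i}^{(t)}\mathbf{e}_K(\theta_t)/(\alpha-f_{j,i})+\mathcal{V}_j(\alpha)$ for some $K$-vector-valued polynomial $\mathcal{V}_j$ of $\alpha$-degree at most $X-1$, and (b) $\mathcal{U}_j(\alpha_m)=0$ for every $m\in\mathcal{S}_w^{(t)}$.

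For (a), I would split $\widetilde{\Delta}_{\ell,i}$ (viewed as a rational function of $\alpha$) into its diagonal Cauchy term $\Delta_{j,i}^{(t)}/(\alpha-f_{j,i})$, its off-diagonal Cauchy terms $\Delta_{j'',i}^{(t)}/(\alpha-f_{j'',i})$ for $j''\in\mathcal{F}_j$, and its $X_\Delta$-dimensional Vandermonde noise $\sum_x\alpha^{x-1}\dddot{Z}_{\ell,i,x}^{(t)}$, and multiply each piece against the two pieces of the query block. The unpacker's $(j,i)$ scalar is the Lagrange weight that equals $1$ at $\alpha=f_{j,i}$ and vanishes at every $\alpha=f_{j'',i}$, $j''\in\mathcal{F}_j$, while the null-shaper scalar equals $1$ at $\alpha=f_{j,i}$. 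By Propositions \ref{prop:distinj} and \ref{prop:distink} these zeros are simple and distinct, so partial-fraction decomposition at the unique pole $\alpha=f_{j,i}$ shows that only the diagonal Cauchy term contributes a residue and that residue is exactly $\Delta_{j,i}^{(t)}\mathbf{e}_K(\theta_t)$; every off-diagonal Cauchy numerator is cancelled by the unpacker and therefore joins $\mathcal{V}_j(\alpha)$ as an honest polynomial. A degree count using the null-shaper's degree $|\mathcal{S}_w^{(t)}|$, the unpacker's degree $R_w^{(t)}-1$, and the identity $|\mathcal{S}_w^{(t)}|+R_w^{(t)}+X_\Delta+T-1=X$ shows that every residual polynomial, including the tightest $\dddot{Z}\cdot\ddot{\mathbf{Z}}$ cross product, has $\alpha$-degree at most $X-1$ and therefore fits inside the Vandermonde slot of the ACSA storage. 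Claim (b) is immediate since the null-shaper carries the factor $\prod_{m\in\mathcal{S}_w^{(t)}}(\alpha-\alpha_m)$. Setting $\dot{\mathbf{Z}}_{j,x}^{(t)}=\dot{\mathbf{Z}}_{j,x}^{(t-1)}+\mathbf{V}_{j,x}^{(t)}$, where $\mathcal{V}_j(\alpha)=\sum_x\alpha^{x-1}\mathbf{V}_{j,x}^{(t)}$, claim (a) makes $\mathbf{S}_n^{(t)}[j]$ a valid ACSA storage with updated content $\dot{\mathbf{W}}_{j,i}^{(t)}=\dot{\mathbf{W}}_{j,i}^{(t-1)}+\Delta_{j,i}^{(t)}\mathbf{e}_K(\theta_t)$ for every available $n$, while claim (b) guarantees that leaving $\mathbf{S}_n^{(t-1)}$ untouched at write-dropouts is automatically consistent with the same new ACSA form, resolving the ``full update from partial access'' paradox.

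For the remaining requirements, $T$-privacy is inherited from Lemma \ref{lemma:read} because $\mathbf{P}_{n,i}^{(t,\theta_t)}$ depends on $\mathbf{\Delta}_t$ and fresh randomness but not on $\theta_t$ (despite the superscript), while $\mathbf{Q}_{n,i}^{(t,\theta_t)}$ is an unchanged MDS-$(N,T)$ secret share of $\mathbf{e}_K(\theta_t)$, and independence of the per-time-slot randomness across $\tau\in[t]$ rules out any leakage through histories. $X_\Delta$-security follows from the MDS-$(N,X_\Delta)$ masking of each scalar $\widetilde{\Delta}_{\ell,i}^{(t)}$ by the i.i.d. uniform noise $\dddot{Z}_{\ell,i,x}^{(t)}$. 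Counting symbols, each of the $N-|\mathcal{S}_w^{(t)}|$ available servers receives $K_c\#_w^{(t)}$ scalars, giving $U_t=(N-|\mathcal{S}_w^{(t)}|)K_c\#_w^{(t)}/(JK_c)=(N-|\mathcal{S}_w^{(t)}|)/(S_w^{\text{\normalfont thresh}}-|\mathcal{S}_w^{(t)}|)$, matching Theorem \ref{thm:acsarw}. The hard part of the proof is the degree-budget analysis in claim (a): the ACSA storage has only $X$ Vandermonde coefficients, and the worst-case cross product of increment noise ($X_\Delta$ dimensions), query noise ($T$ dimensions), the factor $(\alpha-f_{j,i})$, and the null-shaper/unpacker polynomials consumes every single one of them, which is precisely the algebraic reason the scheme demands $X\geq X_\Delta+T$ and fixes the write-dropout threshold at $S_w^{\text{\normalfont thresh}}=X-(X_\Delta+T-1)$.
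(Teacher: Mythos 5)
Your proposal is correct and tracks the paper's proof step for step in substance. The paper proves your claims (a) and (b) by explicitly expanding the product $\mathbf{\Omega}_{n,i}^{(t)}\mathbf{\Upsilon}_{n,i}^{(t)}\mathbf{P}_{n,i}^{(t,\theta_t)}\mathbf{Q}_{n,i}^{(t,\theta_t)}$ through a chain of four displayed equalities, multiplying the null-shaper, unpacker, increment, and query factors one pair at a time and tracking the growing Vandermonde degree at each stage, culminating in exactly the same identity $X = X_\Delta + R_w^{(t)} + |\mathcal{S}_w^{(t)}| + T - 1$ that you isolate; your reformulation as a rational-function identity in a formal indeterminate $\alpha$, with a residue extraction at the pole $\alpha = f_{j,i}$ plus a single degree count on the polynomial remainder, is a more compact and conceptual packaging of the same algebra rather than a different route, and buys a cleaner separation between the correctness-of-content argument (the residue) and the preservation-of-ACSA-form argument (the degree budget). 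Two small notes for completeness: (i) the $T$-privacy claim also needs the observation you make parenthetically — that $\mathbf{P}_{n,i}^{(t,\theta_t)}$ is a function of $\mathbf{\Delta}_t$ and noise only and does not actually depend on $\theta_t$ — which the paper leaves slightly implicit by deferring to Lemma \ref{lemma:read}; and (ii) your upload count omits the read-query upload term $\bigl|[N]\setminus\mathcal{S}_r^{(t)}\setminus\mathcal{S}_w^{(t)}\bigr|\mu K K_c$ that the paper carries before taking $L/K\to\infty$, but since that term is $O(K)$ while $L = JK_c\to\infty$, it vanishes in the stated asymptotic regime and your final formula is the same.
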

\begin{proof}
    Let us first inspect the second term on the RHS of \eqref{eq:update}. Note that for any $t\in\mathbb{N}$, for all $n\in[N],i\in[K_c]$, we can write
    \begin{align}
         & \mathbf{\Omega}_{n,i}^{(t)}\mathbf{\Upsilon}_{n,i}^{(t)}\mathbf{P}_{n,i}^{(t,\theta_t)}\mathbf{Q}_{n,i}^{(t,\theta_t)}\notag                                                                                       \\
         & =\left[\left(\mathbf{\Gamma}_{n,1,i}^{(t)}\right)^{\mathsf{T}},\left(\mathbf{\Gamma}_{n,2,i}^{(t)}\right)^{\mathsf{T}},\cdots,\left(\mathbf{\Gamma}_{n,\#_w^{(t)},i}^{(t)}\right)^{\mathsf{T}}\right]^{\mathsf{T}},
    \end{align}
    where for all $\ell\in[\#_w^{(t)}]$, we define $\phi_{\ell}=(\ell-1)R_w^{(t)}$, and
    \begin{align}
         & \mathbf{\Gamma}_{n,\ell,i}^{(t)}\notag                         \\
         & =\left[\begin{array}{l}
                \left(\frac{\prod_{j\in\mathcal{F}_{\phi_{\ell}+1}^{(t)}}(\alpha_n-f_{j,i})}{\prod_{j\in\mathcal{F}_{\phi_{\ell}+1}^{(t)}}(f_{\phi_{\ell}+1,i}-f_{j,i})}\right)\left(\sum_{j\in[\phi_{\ell}+1:\phi_{\ell}+R_w^{(t)}]}\frac{1}{\alpha_n-f_{j,i}}\Delta_{j,i}^{(t)}+\sum_{x\in[X_{\Delta}]}\alpha_n^{x-1}\dddot{Z}_{\ell,i,x}^{(t)}\right)             \\
                \hspace{1cm}\left(\frac{\prod_{m\in\mathcal{S}_w^{(t)}}(\alpha_n-\alpha_m)}{\prod_{m\in\mathcal{S}_w^{(t)}}(f_{\phi_{\ell}+1,i}-\alpha_m)}\right)\left(\mathbf{e}_K(\theta_t)+(\alpha_n-f_{\phi_{\ell}+1,i})\sum_{s\in[T]}\alpha_n^{s-1}\ddot{\mathbf{Z}}_{\phi_{\ell}+1,i,s}^{(t)}\right)                                              \\
                \hdashline[2pt/2pt]
                \left(\frac{\prod_{j\in\mathcal{F}_{\phi_{\ell}+2}^{(t)}}(\alpha_n-f_{j,i})}{\prod_{j\in\mathcal{F}_{\phi_{\ell}+2}^{(t)}}(f_{\phi_{\ell}+2,i}-f_{j,i})}\right)\left(\sum_{j\in[\phi_{\ell}+1:\phi_{\ell}+R_w^{(t)}]}\frac{1}{\alpha_n-f_{j,i}}\Delta_{j,i}^{(t)}+\sum_{x\in[X_{\Delta}]}\alpha_n^{x-1}\dddot{Z}_{\ell,i,x}^{(t)}\right)             \\
                \hspace{1cm}\left(\frac{\prod_{m\in\mathcal{S}_w^{(t)}}(\alpha_n-\alpha_m)}{\prod_{m\in\mathcal{S}_w^{(t)}}(f_{\phi_{\ell}+2,i}-\alpha_m)}\right)\left(\mathbf{e}_K(\theta_t)+(\alpha_n-f_{\phi_{\ell}+2,i})\sum_{s\in[T]}\alpha_n^{s-1}\ddot{\mathbf{Z}}_{\phi_{\ell}+2,i,s}^{(t)}\right)                                              \\
                \hdashline[2pt/2pt]
                \multicolumn{1}{c}{\vdots}                                                                                                                                                                                                                                                                                               \\
                \hdashline[2pt/2pt]
                \left(\frac{\prod_{j\in\mathcal{F}_{\phi_{\ell}+R_w^{(t)}}^{(t)}}(\alpha_n-f_{j,i})}{\prod_{j\in\mathcal{F}_{\phi_{\ell}+R_w^{(t)}}^{(t)}}(f_{\phi_{\ell}+R_w^{(t)},i}-f_{j,i})}\right)\left(\sum_{j\in[\phi_{\ell}+1:\phi_{\ell}+R_w^{(t)}]}\frac{1}{\alpha_n-f_{j,i}}\Delta_{j,i}^{(t)}+\sum_{x\in[X_{\Delta}]}\alpha_n^{x-1}\dddot{Z}_{\ell,i,x}^{(t)}\right) \\
                \hspace{1cm}\left(\frac{\prod_{m\in\mathcal{S}_w^{(t)}}(\alpha_n-\alpha_m)}{\prod_{m\in\mathcal{S}_w^{(t)}}(f_{\phi_{\ell}+R_w^{(t)},i}-\alpha_m)}\right)\left(\mathbf{e}_K(\theta_t)+(\alpha_n-f_{\phi_{\ell}+R_w^{(t)},i})\sum_{s\in[T]}\alpha_n^{s-1}\ddot{\mathbf{Z}}_{\phi_{\ell}+R_w^{(t)},i,s}^{(t)}\right)
            \end{array}\right]\label{eq:update0}  \\
         & =\left[\begin{array}{l}
                \left(\frac{\prod_{m\in\mathcal{S}_w^{(t)}}(\alpha_n-\alpha_m)}{\prod_{m\in\mathcal{S}_w^{(t)}}(f_{\phi_{\ell}+1,i}-\alpha_m)}\right)\left(\frac{1}{\alpha_n-f_{\phi_{\ell}+1,i}}\Delta_{\phi_{\ell}+1,i}^{(t)}+\sum_{m\in[X_{\Delta}+R_w^{(t)}-1]}\alpha_n^{m-1}\dddot{I}_{\phi_{\ell}+1,i,m}^{(t)}\right)                 \\
                \hspace{1cm}\left(\mathbf{e}_K(\theta_t)+(\alpha_n-f_{\phi_{\ell}+1,i})\sum_{s\in[T]}\alpha_n^{s-1}\ddot{\mathbf{Z}}_{\phi_{\ell}+1,i,s}^{(t)}\right)                                                                                                                                                         \\
                \hdashline[2pt/2pt]
                \left(\frac{\prod_{m\in\mathcal{S}_w^{(t)}}(\alpha_n-\alpha_m)}{\prod_{m\in\mathcal{S}_w^{(t)}}(f_{\phi_{\ell}+2,i}-\alpha_m)}\right)\left(\frac{1}{\alpha_n-f_{\phi_{\ell}+2,i}}\Delta_{\phi_{\ell}+2,i}^{(t)}+\sum_{m\in[X_{\Delta}+R_w^{(t)}-1]}\alpha_n^{m-1}\dddot{I}_{\phi_{\ell}+2,i,m}^{(t)}\right)                 \\
                \hspace{1cm}\left(\mathbf{e}_K(\theta_t)+(\alpha_n-f_{\phi_{\ell}+2,i})\sum_{s\in[T]}\alpha_n^{s-1}\ddot{\mathbf{Z}}_{\phi_{\ell}+2,i,s}^{(t)}\right)                                                                                                                                                         \\
                \hdashline[2pt/2pt]
                \multicolumn{1}{c}{\vdots}                                                                                                                                                                                                                                                                          \\
                \hdashline[2pt/2pt]
                \left(\frac{\prod_{m\in\mathcal{S}_w^{(t)}}(\alpha_n-\alpha_m)}{\prod_{m\in\mathcal{S}_w^{(t)}}(f_{\phi_{\ell}+R_w^{(t)},i}-\alpha_m)}\right)\left(\frac{1}{\alpha_n-f_{\phi_{\ell}+R_w^{(t)},i}}\Delta_{\phi_{\ell}+R_w^{(t)},i}^{(t)}+\sum_{m\in[X_{\Delta}+R_w^{(t)}-1]}\alpha_n^{m-1}\dddot{I}_{\phi_{\ell}+R_w^{(t)},i,m}^{(t)}\right) \\
                \hspace{1cm}\left(\mathbf{e}_K(\theta_t)+(\alpha_n-f_{\phi_{\ell}+R_w^{(t)},m})\sum_{s\in[T]}\alpha_n^{s-1}\ddot{\mathbf{Z}}_{\phi_{\ell}+R_w^{(t)},i,s}^{(t)}\right)
            \end{array}\right]\label{eq:update1}  \\
         & =\left[\begin{array}{l}
                \left(\frac{1}{\alpha_n-f_{\phi_{\ell}+1,i}}\Delta_{\phi_{\ell}+1,i}^{(t)}+\sum_{m\in[X_{\Delta}+R_w^{(t)}+|\mathcal{S}_w^{(t)}|-1]}\alpha_n^{m-1}\ddddot{I}_{\phi_{\ell}+1,i,m}^{(t)}\right)             \\
                \hspace{1cm}\left(\mathbf{e}_K(\theta_t)+(\alpha_n-f_{\phi_{\ell}+1,i})\sum_{s\in[T]}\alpha_n^{s-1}\ddot{\mathbf{Z}}_{\phi_{\ell}+1,i,s}^{(t)}\right)                                \\
                \hdashline[2pt/2pt]
                \left(\frac{1}{\alpha_n-f_{\phi_{\ell}+2,i}}\Delta_{\phi_{\ell}+2,i}^{(t)}+\sum_{m\in[X_{\Delta}+R_w^{(t)}+|\mathcal{S}_w^{(t)}|-1]}\alpha_n^{m-1}\ddddot{I}_{\phi_{\ell}+2,i,m}^{(t)}\right)             \\
                \hspace{1cm}\left(\mathbf{e}_K(\theta_t)+(\alpha_n-f_{\phi_{\ell}+2,i})\sum_{s\in[T]}\alpha_n^{s-1}\ddot{\mathbf{Z}}_{\phi_{\ell}+2,i,s}^{(t)}\right)                                \\
                \hdashline[2pt/2pt]
                \multicolumn{1}{c}{\vdots}                                                                                                                                                 \\
                \hdashline[2pt/2pt]
                \left(\frac{1}{\alpha_n-f_{\phi_{\ell}+R_w^{(t)},i}}\Delta_{\phi_{\ell}+R_w^{(t)},i}^{(t)}+\sum_{m\in[X_{\Delta}+R_w^{(t)}+|\mathcal{S}_w^{(t)}|-1]}\alpha_n^{m-1}\ddddot{I}_{\phi_{\ell}+R_w^{(t)},i,m}^{(t)}\right) \\
                \hspace{1cm}\left(\mathbf{e}_K(\theta_t)+(\alpha_n-f_{\phi_{\ell}+R_w^{(t)},i})\sum_{s\in[T]}\alpha_n^{s-1}\ddot{\mathbf{Z}}_{\phi_{\ell}+R_w^{(t)},i,s}^{(t)}\right)
            \end{array}\right]\label{eq:update2}  \\
         & =\left[\begin{array}{l}
                \left(\frac{1}{\alpha_n-f_{\phi_{\ell}+1,i}}\Delta_{\phi_{\ell}+1,i}^{(t)}\mathbf{e}_K(\theta_t)+\sum_{m\in[X_{\Delta}+R_w^{(t)}+|\mathcal{S}_w^{(t)}|+T-1]}\alpha_n^{m-1}\dot{\mathbf{I}}_{\phi_{\ell}+1,i,m}^{(t)}\right) \\
                \hdashline[2pt/2pt]
                \left(\frac{1}{\alpha_n-f_{\phi_{\ell}+2,i}}\Delta_{\phi_{\ell}+2,i}^{(t)}\mathbf{e}_K(\theta_t)+\sum_{m\in[X_{\Delta}+R_w^{(t)}+|\mathcal{S}_w^{(t)}|+T-1]}\alpha_n^{m-1}\dot{\mathbf{I}}_{\phi_{\ell}+2,i,m}^{(t)}\right) \\
                \hdashline[2pt/2pt]
                \multicolumn{1}{c}{\vdots}                                                                                                                                                                   \\
                \hdashline[2pt/2pt]
                \left(\frac{1}{\alpha_n-f_{\phi_{\ell}+R_w^{(t)},i}}\Delta_{\phi_{\ell}+R_w^{(t)},i}^{(t)}\mathbf{e}_K(\theta_t)+\sum_{m\in[X_{\Delta}+R_w^{(t)}+|\mathcal{S}_w^{(t)}|+T-1]}\alpha_n^{m-1}\dot{\mathbf{I}}_{\phi_{\ell}+R_w^{(t)},i,m}^{(t)}\right)
            \end{array}\right],\label{eq:update3}
    \end{align}
    where for all $v\in[R_w^{(t)}]$, $\left(\dddot{I}_{\phi_{\ell}+v,m}\right)_{m\in[X_{\Delta}+R_w^{(t)}-1]}$, $\left(\ddddot{I}_{\phi_{\ell}+v,m}\right)_{m\in[X_{\Delta}+R_w^{(t)}+|\mathcal{S}_w^{(t)}|-1]}$ and\\ $\left(\dot{\mathbf{I}}_{\phi_{\ell}+v,m}\right)_{m\in[X_{\Delta}+R_w^{(t)}+|\mathcal{S}_w^{(t)}|+T-1]}$ are various interference symbols, whose exact forms are not important. Note that in \eqref{eq:update1}, we multiply the first two terms in each row of \eqref{eq:update0}. It can be justified from the fact that the constants $\left(f_{j,i}\right)_{j\in[\phi_{\ell}+1:\phi_{\ell}+R_w^{(t)}]}$ are distinct according to Proposition \ref{prop:distinj}. Besides, for all $v\in[R_w^{(t)}]$, according to the definition of $\mathcal{F}_{\phi_{\ell}+v}^{(t)}$, we can equivalently write $\mathcal{F}_{\phi_{\ell}+v}^{(t)}=[\phi_{\ell}+1:\phi_{\ell}+R_w^{(t)}]\setminus\{\phi_{\ell}+v\}$, thus $\left|\mathcal{F}_{\phi_{\ell}+v}^{(t)}\right|=R_w^{(t)}-1$ and  $(\phi_{\ell}+v)\not\in \mathcal{F}_{\phi_{\ell}+v}$. Note that the denominator in the first term in each row of \eqref{eq:update0} is for normalization, which is the remainder of the polynomial division (with respect to $\alpha_n$) $\left(\prod_{j\in\mathcal{F}_{\phi_{\ell}+v}^{(t)}}(\alpha_n-f_{j,i})\right)/(\alpha_n-f_{\phi_{\ell}+v,i})$. In \eqref{eq:update2}, we multiply the first two terms in each row of \eqref{eq:update1}, and it can be justified by noting that the denominator in the first term in each row of \eqref{eq:update1} is the remainder of the polynomial division (with respect to $\alpha_n$) $\left(\prod_{m\in\mathcal{S}_w^{(t)}}(\alpha_n-\alpha_m)\right)/(\alpha_n-f_{\phi_{\ell}+v,i})$. And finally in \eqref{eq:update3}, we multiply the two terms in each row of \eqref{eq:update2}. Therefore, for all $n\in[N],i\in[K_c]$, the term $\mathbf{\Omega}_{n,i}^{(t)}\mathbf{\Upsilon}_{n,i}^{(t)}\mathbf{P}_{n,i}^{(t,\theta_t)}\mathbf{Q}_{n,i}^{(t,\theta_t)}$ can be written as follows.
    \begin{align}
         & \mathbf{\Omega}_{n,i}^{(t)}\mathbf{\Upsilon}_{n,i}^{(t)}\mathbf{P}_{n,i}^{(t,\theta_t)}\mathbf{Q}_{n,i}^{(t,\theta_t)}\notag \\
         & =\left[\begin{array}{l}
                \frac{1}{\alpha_n-f_{1,i}}\Delta_{1,i}^{(t)}\mathbf{e}_K(\theta_t)+\sum_{m\in[X_{\Delta}+R_w^{(t)}+|\mathcal{S}_w^{(t)}|+T-1]}\alpha_n^{m-1}\dot{\mathbf{I}}_{1,i,m}^{(t)} \\
                \hdashline[2pt/2pt]
                \frac{1}{\alpha_n-f_{2,i}}\Delta_{2,i}^{(t)}\mathbf{e}_K(\theta_t)+\sum_{m\in[X_{\Delta}+R_w^{(t)}+|\mathcal{S}_w^{(t)}|+T-1]}\alpha_n^{m-1}\dot{\mathbf{I}}_{2,i,m}^{(t)} \\
                \hdashline[2pt/2pt]
                \multicolumn{1}{c}{\vdots}                                                                                                                                 \\
                \hdashline[2pt/2pt]
                \frac{1}{\alpha_n-f_{J,i}}\Delta_{J,i}^{(t)}\mathbf{e}_K(\theta_t)+\sum_{m\in[X_{\Delta}+R_w^{(t)}+|\mathcal{S}_w^{(t)}|+T-1]}\alpha_n^{m-1}\dot{\mathbf{I}}_{J,i,m}^{(t)}
            \end{array}\right].
    \end{align}
    Note that $X=X_{\Delta}+R_w^{(t)}+|\mathcal{S}_w^{(t)}|+T-1$, the update equation \eqref{eq:update} is thus correct because
    \begin{align}
         & \mathbf{S}_n^{(t-1)}+\sum_{i\in[K_c]}\mathbf{\Omega}_{n,i}^{(t)}\mathbf{\Upsilon}_{n,i}^{(t)}\mathbf{P}_{n,i}^{(t,\theta_t)}\mathbf{Q}_{n,i}^{(t,\theta_t)}\notag \\
         & =\left[\begin{array}{c}
                \sum_{i\in[K_c]}\frac{1}{\alpha_n-f_{1,i}}\dot{\mathbf{W}}^{(t-1)}_{1,i}+\sum_{x\in[X]}\alpha_n^{x-1}\dot{\mathbf{Z}}_{1,x}^{(t-1)} \\
                \hdashline[2pt/2pt]
                \sum_{i\in[K_c]}\frac{1}{\alpha_n-f_{2,i}}\dot{\mathbf{W}}^{(t-1)}_{2,i}+\sum_{x\in[X]}\alpha_n^{x-1}\dot{\mathbf{Z}}_{2,x}^{(t-1)} \\
                \hdashline[2pt/2pt]
                \vdots                                                                                                                              \\
                \hdashline[2pt/2pt]
                \sum_{i\in[K_c]}\frac{1}{\alpha_n-f_{J,i}}\dot{\mathbf{W}}^{(t-1)}_{J,i}+\sum_{x\in[X]}\alpha_n^{x-1}\dot{\mathbf{Z}}_{J,x}^{(t-1)}
            \end{array}\right]\notag                                                                                                                    \\
         & \quad+\left[\begin{array}{l}
                \sum_{i\in[K_c]}\frac{1}{\alpha_n-f_{1,i}}\Delta_{1,i}^{(t)}\mathbf{e}_K(\theta_t)+\sum_{x\in[X]}\alpha_n^{x-1}\ddot{\mathbf{I}}_{1,x}^{(t)} \\
                \hdashline[2pt/2pt]
                \sum_{i\in[K_c]}\frac{1}{\alpha_n-f_{2,i}}\Delta_{2,i}^{(t)}\mathbf{e}_K(\theta_t)+\sum_{x\in[X]}\alpha_n^{x-1}\ddot{\mathbf{I}}_{2,x}^{(t)} \\
                \hdashline[2pt/2pt]
                \multicolumn{1}{c}{\vdots}                                                                                                                   \\
                \hdashline[2pt/2pt]
                \sum_{i\in[K_c]}\frac{1}{\alpha_n-f_{J,i}}\Delta_{J,i}^{(t)}\mathbf{e}_K(\theta_t)+\sum_{x\in[X]}\alpha_n^{x-1}\ddot{\mathbf{I}}_{J,x}^{(t)}
            \end{array}\right]                                                                                                                     \\
         & =\mathbf{S}_n^{(t)},
    \end{align}
    where for all $t\in\mathbb{N}, j\in[J], x\in[X]$, $\dot{\mathbf{Z}}_{j,x}^{(t)}=\dot{\mathbf{Z}}_{j,x}^{(t-1)}+\ddot{\mathbf{I}}_{j,x}^{(t)}$, and $\ddot{\mathbf{I}}_{j,x}^{(t)}=\sum_{i\in[K_c]}\dot{\mathbf{I}}_{j,i,x}^{(t)}$. Note that for all $n\in\mathcal{S}_w^{(t)}$, $i\in[K_c]$, we have $\mathbf{\Omega}_{n,i}^{(t)}=\mathbf{0}$. Therefore, for all $n\in\mathcal{S}_w^{(t)}$, it holds that $\mathbf{S}_n^{(t)}=\mathbf{S}_n^{(t-1)}$. In other words, the update equation \eqref{eq:update} correctly updates the desired submodel and achieves the ACSA storage at time $t+1$ by updating the storage of server $n, n\in\overline{\mathcal{S}}_w^{(t)}$. The proof of $T$-privacy follows from that in Lemma \ref{lemma:read}, so we do not repeat it here. The proof of $X_{\Delta}$-security follows from the fact that by the definition of ACSA increment $\left(P_n^{(t,\theta_t)}\right)_{n\in[N]}$, the symbols of the increment $\mathbf{\Delta}_t$ are protected by the MDS$(N,X_{\Delta})$ coded uniform i.i.d. random noise symbols. Thus the ACSA increment for the $N$ servers form a secret sharing of threshold $X_{\Delta}$, and it is independent of the write-queries by the users $\tau, \tau\in[t-1]$ and the read-queries by the users $\tau, \tau\in[t]$. Finally let us calculate the upload cost. The upload cost consists of two parts, i.e., the upload cost of the ACSA query and the upload cost of the ACSA increment. Note that to upload the ACSA query, a total of $\left|[N]\setminus\mathcal{S}_r^{(t)}\setminus\mathcal{S}_w^{(t)}\right|\mu K K_c$ $q$-ary symbols must be uploaded. On the other hand, to upload the ACSA increment, we need to upload a total of $\left(N-|\mathcal{S}_w^{(t)}|\right)\#_w^{(t)} K_c$ $q$-ary symbols. Therefore, in the limit as $L/K\rightarrow\infty$, the normalized upload cost is
    \begin{align}
        U_t & =\frac{K_c\left(N-|\mathcal{S}_w^{(t)}|\right)\#_w^{(t)}+\left|[N]\setminus\mathcal{S}_r^{(t)}\setminus\mathcal{S}_w^{(t)}\right|\mu KK_c}{L}                    \\
            & \stackrel{L/K\rightarrow\infty}{=}\frac{N-|\mathcal{S}_w^{(t)}|}{R_w^{(t)}}\\
            &=\frac{N-|\mathcal{S}_w^{(t)}|}{S_w^{\text{\normalfont thresh}}-|\mathcal{S}_w^{(t)}|}.
    \end{align}
    This completes the proof of Lemma \ref{lemma:write}.
\end{proof}

The ACSA-RW scheme satisfies the correctness, $T$-privacy, and $X_{\Delta}$-security constraints for each update $t, t\in\mathbb{N}$ because of Lemma \ref{lemma:read} and Lemma \ref{lemma:write}, which hold for all $t, t\in\mathbb{N}$. Now let us see why the $X$-security constraint is satisfied. By the definition of the ACSA storage (i.e., Definition \ref{def:ACSAstor}) at any time $t,t\in\mathbb{N}$, the symbols of the $K$ submodels are protected by the MDS$(N,X)$ coded i.i.d. uniform random noise symbols. In other words, it forms a secret sharing of threshold $X$, thus $X$-security is guaranteed.
\begin{remark}{\bf(Byzantine Tolerance) }It is remarkable that the answers returned by the servers in the private read phase can be viewed as codewords of an MDS code (generated by the Cauchy-Vandermonde matrix, see, e.g., \cite{Jia_Jafar_MDSXSTPIR}). Therefore, with additional $2B$ redundant answers from the servers, we can correct up to $B$ erroneous answers.
\end{remark}
\begin{remark}{\bf(Symmetric Security) }If common randomness is allowed among the servers, so-called {\bf symmetric security} can be achieved\cite{Chen_Jia_Wang_Jafar}, i.e., the user will learn nothing about the global model beyond the desired submodel. Note that this does not affect the communication cost of the ACSA-RW scheme.
\end{remark}
\begin{remark}{\bf(External Adversaries versus Internal Adversaries) }\label{remark:extintadv}Recall that the $X$-security constraint only requires protection against  an external adversary who can  access the current storage  but not the past history at any $X$-servers. However, a closer look at the ACSA-RW scheme reveals that if the number of compromised servers is no more than $\min(X_\Delta, T)$, then even an \emph{internal} adversary, i.e., an adversary who has access to the entire history of all previous stored values and queries seen by the compromised servers, can still learn nothing about the stored submodels.
\end{remark}
\begin{remark}\label{remark:ac}{\bf(Access Complexity) }We define the access complexity as the number of elements over the finite field $\mathbb{F}_q$ that must be accessed/updated during the private read and private write phases. We note that at any time $t,t\in\mathbb{N}$, the access complexity of each of the responsive servers in the private read and write phases is at most $KL/K_c$. Hence with greater $K_c$, it is possible to reduce the access complexity.
\end{remark}
\begin{remark}{\bf(Encoding and Decoding Complexity)} Let us consider the complexity of the encoding and decoding algorithms of our construction. It is worth noting that the computations for producing the ACSA storage, ACSA query and ACSA increment can be regarded as multiplications of (scaled) Cauchy-Vandermonde matrices with various vectors. The computation for recovering the desired submodel by the user from the answers of the $N$ servers can be viewed as solving linear systems defined by Cauchy-Vandermonde matrices. Cauchy-Vandermonde matrices are an important class of structured matrices, for which ``\emph{superfast}'' algorithms have been studied extensively \cite{Pan_Structured,Finck_FastCV}. Therefore, by these superfast algorithms, the complexity of producing the ACSA storage, ACSA query and ACSA increment is at most $\widetilde{\mathcal{O}}((LKN\log^2N)/K_c), \widetilde{\mathcal{O}}(\mu KNK_c\log^2N)$ and $\widetilde{\mathcal{O}}(U_tL\log^2N)$, respectively. On the other hand, the complexity of decoding the desired submodel from the answers of the servers is at most $\widetilde{\mathcal{O}}(D_tL\log^2N)$. It is obvious that the encoding/decoding algorithms have a complexity that is almost linear in their output/input sizes.
\end{remark}

\subsection{Example}\label{sec:example}
Let us consider an illustrative example to make the construction of the ACSA-RW scheme and the proof of Theorem \ref{thm:acsarw} more accessible. In particular, for this example, let us set $N=8, X=4, T=1, X_{\Delta}=1$ and $K_c=1$ (note that $L=J$ in this case). We follow the notations used in the proof of Theorem \ref{thm:acsarw}, and since $K_c=1$, we are able to omit an index in some of the subscripts, which corresponds to $i\in[K_c]$. For example, for all $l\in[L]$, $f_{l,1}$ is abbreviated as $f_{l}$, etc. We have $S_r^{\text{\normalfont thresh}}=3, S_w^{\text{\normalfont thresh}}=3$. Let $\xi$ be a positive integer, and we set $L=\xi\cdot\lcm\left([S_r^{\text{\normalfont thresh}}]\cup[S_w^{\text{\normalfont thresh}}]\right)=6\xi$, i.e., each of the $K$ submodels consists of $L=6\xi$ symbols from a finite field $\mathbb{F}_q, q\geq N+\mu=11$. Let $\alpha_1,\alpha_2,\cdots,\alpha_8, \widetilde{f}_1,\widetilde{f}_2,\widetilde{f}_3$ be a total of $11$ distinct elements from the finite field $\mathbb{F}_q$. For $\left(f_{l}\right)_{l\in[L]}$, let us define
\begin{align}
    (f_1,f_2,\cdots,f_{L})=(\widetilde{f}_1,\widetilde{f}_2,\widetilde{f}_3,\widetilde{f}_1,\widetilde{f}_2,\widetilde{f}_3,\cdots,\widetilde{f}_1,\widetilde{f}_2,\widetilde{f}_3).
\end{align}
Let $\left(\dot{\mathbf{Z}}_{l,x}^{(0)}\right)_{l\in[L],x\in[4]}$ be uniformly i.i.d. column vectors from $\mathbb{F}_q^K$. The initial ACSA storage at the $N=8$ servers is defined as follows.
\begin{align}
    \mathbf{S}_n^{(0)}=
    \begin{bmatrix}
        \frac{1}{\alpha_n-f_1}\dot{\mathbf{W}}^{(0)}_{1}+\sum_{x\in[4]}\alpha_n^{x-1}\dot{\mathbf{Z}}_{1,x}^{(0)} \\
        \frac{1}{\alpha_n-f_2}\dot{\mathbf{W}}^{(0)}_{2}+\sum_{x\in[4]}\alpha_n^{x-1}\dot{\mathbf{Z}}_{2,x}^{(0)} \\
        \vdots                                                                                                    \\
        \frac{1}{\alpha_n-f_{L}}\dot{\mathbf{W}}^{(0)}_{L}+\sum_{x\in[4]}\alpha_n^{x-1}\dot{\mathbf{Z}}_{L,x}^{(0)}
    \end{bmatrix},\forall n\in[8].
\end{align}
Now let us assume that User $1$ experiences $|\mathcal{S}_r^{(1)}|=1, |\mathcal{S}_w^{(1)}|=2$, i.e., there is $1$ dropout server in the read phase and $2$ dropout servers in the write phase. Note that the two sets  can be arbitrarily realized. We have $R_r^{(1)}=2, R_w^{(1)}=1, \#_r^{(1)}=3\xi, \#_w^{(1)}=6\xi$. The ACSA query sent by User $1$ to the $n^{th}$ server, $n\in[N]$ is defined as follows.
\begin{align}
    \mathbf{Q}^{(1,\theta_1)}_{n}=\begin{bmatrix}
        \mathbf{e}_K(\theta_1)+(\alpha_n-f_1)\ddot{\mathbf{Z}}_{1,1}^{(1)} \\
        \mathbf{e}_K(\theta_1)+(\alpha_n-f_2)\ddot{\mathbf{Z}}_{2,1}^{(1)} \\
        \vdots                                                             \\
        \mathbf{e}_K(\theta_1)+(\alpha_n-f_{L})\ddot{\mathbf{Z}}_{L,1}^{(1)}
    \end{bmatrix}.
\end{align}
where 
\begin{align}
     & \left(\ddot{\mathbf{Z}}_{1,1}^{(1)},\ddot{\mathbf{Z}}_{2,1}^{(1)},\cdots,\ddot{\mathbf{Z}}_{L,1}^{(1)}\right)=\left(\widetilde{\mathbf{Z}}_{1,1}^{(1)},\widetilde{\mathbf{Z}}_{2,1}^{(1)},\widetilde{\mathbf{Z}}_{3,1}^{(1)},\widetilde{\mathbf{Z}}_{1,1}^{(1)},\widetilde{\mathbf{Z}}_{2,1}^{(1)},\widetilde{\mathbf{Z}}_{3,1}^{(1)},\cdots,\widetilde{\mathbf{Z}}_{1,1}^{(1)},\widetilde{\mathbf{Z}}_{2,1}^{(1)},\widetilde{\mathbf{Z}}_{3,1}^{(1)}\right),
\end{align}
and $\left(\widetilde{\mathbf{Z}}_{l,1}^{(1)}\right)_{l\in[3]}$ are i.i.d. uniform column vectors from $\mathbb{F}_q^K$. Evidently, for all $n\in[8]$, $\mathbf{Q}^{(1,\theta_1)}_{n}$ is uniquely determined by its first $3K$ rows. Upon receiving the queries, servers $n,n\in\overline{\mathcal{S}}_r^{(1)}$ respond to User $1$ with answers $A_n^{(1,\theta_1)}$ constructed as follows.
\begin{align}
    A_n^{(1,\theta_1)}=\left(\mathbf{A}_{n,\ell}^{(1,\theta_1)}\right)_{\ell\in[3\xi]}, n\in\overline{\mathcal{S}}_r^{(1)},
\end{align}
where for all $n\in\overline{\mathcal{S}}_r^{(1)}, \ell\in[3\xi]$,
\begin{align}
    \mathbf{A}_{n,\ell}^{(1,\theta_1)} & =\left(\mathbf{S}_n^{(0)}\right)^{\mathsf{T}}\mathbf{\Xi}_{n,\ell}^{(1)}\mathbf{Q}_{n}^{(1,\theta_1)}                                                                                                                                      \\
                                    & =\sum_{l\in[2\ell-1:2\ell]}\left(\frac{1}{\alpha_n-f_l}\dot{\mathbf{W}}^{(0)}_{l}+\sum_{x\in[4]}\alpha_n^{x-1}\dot{\mathbf{Z}}_{l,x}^{(0)}\right)^{\mathsf{T}}\left(\mathbf{e}_K(\theta_1)+(\alpha_n-f_l)\ddot{\mathbf{Z}}_{l,1}^{(1)}\right) \\
                                    & =\frac{W_{\theta_1}^{(0)}(2\ell-1)}{\alpha_n-f_{2\ell-1}}+\frac{W_{\theta_1}^{(0)}(2\ell)}{\alpha_n-f_{2\ell}}+\sum_{i\in[5]}\alpha_n^{i-1}\ddot{I}_{\ell,i}^{(1)},
\end{align}
where $\left(\ddot{I}_{\ell,i}\right)_{\ell\in[3\xi],i\in[5]}$ are various interference symbols. Thus for all $\ell\in[3\xi]$, User $1$ recovers the desired symbols $W_{\theta_1}^{(0)}(2\ell-1)$ and $W_{\theta_1}^{(0)}(2\ell)$ by inverting the following matrix.
\begin{align}
    \mathbf{C}=\begin{bmatrix}
        \frac{1}{f_{2\ell-1}-\alpha_{\overline{\mathcal{S}}_r^{(1)}(1)}} & \frac{1}{f_{2\ell}-\alpha_{\overline{\mathcal{S}}_r^{(1)}(1)}} & 1      & \cdots & \alpha_{\overline{\mathcal{S}}_r^{(1)}(1)}^{4} \\
        \frac{1}{f_{2\ell-1}-\alpha_{\overline{\mathcal{S}}_r^{(1)}(2)}} & \frac{1}{f_{2\ell}-\alpha_{\overline{\mathcal{S}}_r^{(1)}(2)}} & 1      & \cdots & \alpha_{\overline{\mathcal{S}}_r^{(1)}(2)}^{4} \\
        \vdots                                                        & \vdots                                                      & \vdots & \vdots & \vdots                                         \\
        \frac{1}{f_{2\ell-1}-\alpha_{\overline{\mathcal{S}}_r^{(1)}(7)}} & \frac{1}{f_{2\ell}-\alpha_{\overline{\mathcal{S}}_r^{(1)}(7)}} & 1      & \cdots & \alpha_{\overline{\mathcal{S}}_r^{(1)}(7)}^{4} \\
    \end{bmatrix}.
\end{align}
Again, its invertibility is guaranteed by the determinant of Cauchy-Vandermonde matrices\cite{Gasca_Martinez_Muhlbach} and the fact that the constants $(f_{2\ell-1}, f_{2\ell}, \alpha_1,\alpha_2,\cdots,\alpha_8)$ are distinct for all $\ell\in[3\xi]$. Recall that $\overline{\mathcal{S}}_r^{(1)}(1)$, $\overline{\mathcal{S}}_r^{(1)}(2)$, etc., refer to distinct elements of $\overline{\mathcal{S}}_r^{(1)}$, i.e., available servers during the first read phase, arranged in ascending order. Therefore, the user is able to reconstruct the desired submodel $\mathbf{W}^{(0)}_{\theta_1}$.

To update the desired submodel with the increment $\mathbf{\Delta}_1$, User $1$ constructs the ACSA increment $\mathbf{P}_n^{(1,\theta_1)}$ as follows.
\begin{align}
    \mathbf{P}_{n}^{(1,\theta_1)}=\diag\left(\widetilde{\Delta}_1^{(1)}\mathbf{I}_{K},\widetilde{\Delta}_2^{(1)}\mathbf{I}_{K},\cdots,\widetilde{\Delta}_{L}^{(1)}\mathbf{I}_{K}\right),
\end{align}
where for all $\ell\in[L]$,
\begin{align}
    \widetilde{\Delta}_\ell^{(1)}=\frac{1}{\alpha_n-f_\ell}\Delta_{\ell}^{(1)}+\dddot{Z}_{\ell,1}^{(1)}.
\end{align}
Note that the ACSA unpacker at time $t=1$ is $\mathbf{\Upsilon}_n^{(1)}=\mathbf{I}_{KL}$ for all $n\in[N]$, therefore,
\begin{align}
     & \mathbf{\Omega}_n^{(1)}\mathbf{\Upsilon}_n^{(1)}\mathbf{P}_{n}^{(1,\theta_1)}\mathbf{Q}_{n}^{(1,\theta_1)}\notag \\
     & =\left[\begin{array}{l}
            \left(\frac{\prod_{i\in\mathcal{S}_w^{(1)}}(\alpha_n-\alpha_i)}{\prod_{i\in\mathcal{S}_w^{(1)}}(f_{1}-\alpha_i)}\right)\left(\frac{1}{\alpha_n-f_1}\Delta_{1}^{(1)}+\dddot{Z}_{1,1}^{(1)}\right)\left(\mathbf{e}_K(\theta_t)+(\alpha_n-f_{1})\ddot{\mathbf{Z}}_{1,s}^{(1)}\right) \\
            \hdashline[2pt/2pt]
            \left(\frac{\prod_{i\in\mathcal{S}_w^{(1)}}(\alpha_n-\alpha_i)}{\prod_{i\in\mathcal{S}_w^{(1)}}(f_{2}-\alpha_i)}\right)\left(\frac{1}{\alpha_n-f_2}\Delta_{2}^{(1)}+\dddot{Z}_{2,1}^{(1)}\right)\left(\mathbf{e}_K(\theta_t)+(\alpha_n-f_{2})\ddot{\mathbf{Z}}_{2,s}^{(1)}\right) \\
            \hdashline[2pt/2pt]
            \multicolumn{1}{c}{\vdots}                                                                                                                                                                                                                                                        \\
            \hdashline[2pt/2pt]
            \left(\frac{\prod_{i\in\mathcal{S}_w^{(1)}}(\alpha_n-\alpha_i)}{\prod_{i\in\mathcal{S}_w^{(1)}}(f_{L}-\alpha_i)}\right)\left(\frac{1}{\alpha_n-f_L}\Delta_{L}^{(1)}+\dddot{Z}_{L,1}^{(1)}\right)\left(\mathbf{e}_K(\theta_t)+(\alpha_n-f_{L})\ddot{\mathbf{Z}}_{L,s}^{(1)}\right)
        \end{array}\right]                                                                         \\
     & =\left[\begin{array}{l}
            \frac{1}{\alpha_n-f_{1}}\Delta_{1}^{(1)}\mathbf{e}_K(\theta_t)+\sum_{i\in[4]}\alpha_n^{i-1}\dot{\mathbf{I}}_{1,i}^{(1)} \\
            \hdashline[2pt/2pt]
            \frac{1}{\alpha_n-f_{2}}\Delta_{2}^{(1)}\mathbf{e}_K(\theta_t)+\sum_{i\in[4]}\alpha_n^{i-1}\dot{\mathbf{I}}_{2,i}^{(1)} \\
            \hdashline[2pt/2pt]
            \multicolumn{1}{c}{\vdots}                                                                                              \\
            \hdashline[2pt/2pt]
            \frac{1}{\alpha_n-f_{L}}\Delta_{L}^{(1)}\mathbf{e}_K(\theta_t)+\sum_{i\in[4]}\alpha_n^{i-1}\dot{\mathbf{I}}_{L,i}^{(1)}
        \end{array}\right].
\end{align}
Thus by the update equation \eqref{eq:update}, the ACSA scheme updates the storage at time $0$, i.e., it holds that $\mathbf{S}_n^{(0)}+\mathbf{\Omega}_n^{(1)}\mathbf{\Upsilon}_n^{(1)}\mathbf{P}_{n}^{(1,\theta_1)}\mathbf{Q}_{1}^{(1,\theta_1)}=\mathbf{S}_n^{(1)}$. Besides, it is guaranteed that $\mathbf{S}_n^{(1)}=\mathbf{S}_n^{(0)}$ for all $n\in\mathcal{S}_w^{(1)}$. This is the end of the full cycle of ACSA-RW at time $1$. We note that in the limit as $L/K\rightarrow\infty$, we have $D_t=7/2=3.5$ and $U_t=6$.

Now let us assume that at time $t=2$, i.e., for the second cycle with a second user, we have $|\mathcal{S}_r^{(1)}|=2, |\mathcal{S}_w^{(1)}|=1$. Therefore, $R_r^{(2)}=1, R_w^{(2)}=2, \#_r^{(2)}=6\xi, \#_w^{(2)}=3\xi$.
Upon receiving the queries, the servers $n, n\in\overline{\mathcal{S}}_r^{(2)}$ respond to the user with the answers as follows.
\begin{align}
    A_n^{(2,\theta_2)}=\left(\mathbf{A}_{n,\ell}^{(2,\theta_2)}\right)_{\ell\in[L]}, n\in\overline{\mathcal{S}}_r^{(2)},
\end{align}
where for all $\ell\in[L]$, we have
\begin{align}
    \mathbf{A}_{n,\ell}^{(2,\theta_2)}=\frac{W_{\theta_2}^{(1)}(\ell)}{\alpha_n-f_{\ell}}+\sum_{i\in[5]}\alpha_n^{i-1}\ddot{I}_{\ell,i}^{(2)}.
\end{align}
Thus, User $2$ recovers the desired submodel by inverting the matrix $\mathbf{C}$ as defined in \eqref{eq:cvmat}. To update the submodel with the increment $\mathbf{\Delta}_2$, User $2$ constructs the ACSA increment $\mathbf{P}_n^{(2,\theta_2)}$ as follows.
\begin{align}
    \mathbf{P}_{n}^{(2,\theta_2)}=\diag\left(\widetilde{\Delta}_1^{(2)}\mathbf{I}_{2K},\widetilde{\Delta}_2^{(2)}\mathbf{I}_{2K},\cdots,\widetilde{\Delta}_{3\xi}^{(2)}\mathbf{I}_{2K}\right),
\end{align}
where for all $\ell\in[3\xi]$,
\begin{align}
    \widetilde{\Delta}_{\ell}^{(2)}=\frac{1}{\alpha_n-f_{2\ell-1}}\Delta_{2\ell-1}^{(2)}+\frac{1}{\alpha_n-f_{2\ell}}\Delta_{2\ell}^{(2)}+\dddot{Z}_{\ell,1}^{(2)}.
\end{align}
Recall that the ACSA unpacker at time $t=2$ is defined as follows.
\begin{align}
    \mathbf{\Upsilon}_n^{(2)} & =\diag\left(\left(\frac{\prod_{i\in\mathcal{F}_1^{(2)}}(\alpha_n-f_i)}{\prod_{i\in\mathcal{F}_1^{(2)}}(f_1-f_i)}\right)\mathbf{I}_K,\cdots,\left(\frac{\prod_{i\in\mathcal{F}_L^{(2)}}(\alpha_n-f_i)}{\prod_{i\in\mathcal{F}_L^{(2)}}(f_L-f_i)}\right)\mathbf{I}_K\right) \\
                              & =\diag\left(\left(\frac{\alpha_n-f_{2}}{f_1-f_2}\right)\mathbf{I}_K,\left(\frac{\alpha_n-f_{1}}{f_2-f_1}\right)\mathbf{I}_K,\left(\frac{\alpha_n-f_{4}}{f_3-f_4}\right)\mathbf{I}_K,\left(\frac{\alpha_n-f_{3}}{f_4-f_3}\right)\mathbf{I}_K,\cdots\right.\notag           \\
                              & \hspace{2cm}\left.\cdots,\left(\frac{\alpha_n-f_{L}}{f_{L-1}-f_L}\right)\mathbf{I}_K,\left(\frac{\alpha_n-f_{L-1}}{f_L-f_{L-1}}\right)\mathbf{I}_K\right).
\end{align}
Therefore, for all $n\in[N]$, we can write
\begin{align}
     & \mathbf{\Upsilon}_n^{(2)}\mathbf{P}_{n}^{(2,\theta_2)}\notag                                                                                                                                                                                                       \\
     & =\diag\left(\left(\frac{1}{\alpha_n-f_{1}}\Delta_{1}^{(2)}+\sum_{i\in[2]}\alpha_n^{i-1}\dddot{I}_{1,i}^{(2)}\right)\mathbf{I}_K,\cdots,\left(\frac{1}{\alpha_n-f_{L}}\Delta_{1}^{(2)}+\sum_{i\in[2]}\alpha_n^{i-1}\dddot{I}_{L,i}^{(2)}\right)\mathbf{I}_K\right).
\end{align}
Accordingly, the second term in the RHS of the update equation \eqref{eq:update} can be written as follows.
\begin{align}
     & \mathbf{\Omega}_n^{(2)}\mathbf{\Upsilon}_n^{(2)}\mathbf{P}_{n}^{(2,\theta_2)}\mathbf{Q}_{n}^{(2,\theta_2)}\notag \\
     & =\left[\begin{array}{l}
            \frac{1}{\alpha_n-f_{1}}\Delta_{1}^{(2)}\mathbf{e}_K(\theta_t)+\sum_{i\in[4]}\alpha_n^{i-1}\dot{\mathbf{I}}_{1,i}^{(2)} \\
            \hdashline[2pt/2pt]
            \frac{1}{\alpha_n-f_{2}}\Delta_{2}^{(2)}\mathbf{e}_K(\theta_t)+\sum_{i\in[4]}\alpha_n^{i-1}\dot{\mathbf{I}}_{2,i}^{(2)} \\
            \hdashline[2pt/2pt]
            \multicolumn{1}{c}{\vdots}                                                                                              \\
            \hdashline[2pt/2pt]
            \frac{1}{\alpha_n-f_{L}}\Delta_{L}^{(2)}\mathbf{e}_K(\theta_t)+\sum_{i\in[4]}\alpha_n^{i-1}\dot{\mathbf{I}}_{L,i}^{(2)}
        \end{array}\right],
\end{align}
which guarantees the correctness of the update. It is remarkable that by the construction of ACSA null-shaper,  for all $n\in\mathcal{S}_w^{(2)}$, it is guaranteed that $\mathbf{\Omega}_n^{(2)}=\mathbf{0}$, thus we have $\mathbf{S}_n^{(2)}=\mathbf{S}_n^{(1)}$ for all $n\in\mathcal{S}_w^{(2)}$. Therefore, the write dropout constraint is satisfied. It is easy to verify that the privacy and security constraints are satisfied, and in the limit as $L/K\rightarrow\infty$, the normalized upload cost is $U_2=7/2=3.5$, the normalized download cost is $D_2=6$. We can see that when the number of dropout servers is decreased, the communication efficiency is accordingly improved (e.g., $U_2<U_1, D_2>D_1$).

\section{Conclusion}
Inspired by the recent interest in $X$-secure $T$-private federated submodel learning, we explored the fundamental problem of privately reading from and writing to a distributed and secure database. By interpreting the private read and write operations as  secure matrix multiplications (between query vectors and stored data), and recognizing that CSA codes are natural solutions to such problems, we constructed a novel Adaptive CSA-RW scheme. ACSA-RW achieves synergistic gains from the joint design of private read and write operations because the same one hot vector representation of the desired message index needs to be secret shared for both the private read and write operations. In addition to allowing private read and write, ACSA-RW also provides elastic resilience against server dropouts, up to thresholds that are determined by the number of redundant storage dimensions. Surprisingly, ACSA-RW is able to fully update the distributed database even though the database is only partially accessible due to write-dropout servers. This is accomplished by exploiting the redundancy that is  already required for secure storage. The scheme allows a memoryless operation of the database in the sense that the storage structure is preserved and users may remain oblivious of the prior history of server dropouts. A promising direction for future work is to explore applications of this idea to multi-version coding \cite{Wang_Multi_Version}. 

\bibliography{Thesis}
\bibliographystyle{IEEEtran}
\end{document}